\newtheorem{algorithm}{Algorithm}{\bfseries}{}
\newtheorem{definition}{Definition}{\bfseries}{}
\newtheorem{theorem}{Theorem}{\bfseries}{}
\newtheorem{proposition}{Proposition}{\bfseries}{}
\newtheorem{lemma}{Lemma}{\bfseries}{}
\newtheorem{corollary}{Corollary}{\bfseries}{}
\title{Complexity of Strong Implementability}
\author{Clemens Thielen and Sven O. Krumke
\institute{Department of Mathematics, University of Kaiserslautern,
Paul-Ehrlich-Str.~14, D-67663~Kaiserslautern, Germany}
\email{\{thielen,krumke\}@mathematik.uni-kl.de}}
\begin{document}
\maketitle

\begin{abstract}
  We consider the question of implementability of a social choice
  function in a classical setting where the preferences of finitely
  many selfish individuals with private information have to be
  aggregated towards a social choice. This is one of the central
  questions in mechanism design. If the concept of \emph{weak}
  implementation is considered, the \emph{Revelation Principle} states
  that one can restrict attention to \emph{truthful implementations}
  and \emph{direct revelation mechanisms}, which implies that
  implementability of a social choice function is ``easy'' to check.
  For the concept of \emph{strong implementation}, however, the
  Revelation Principle becomes invalid, and the complexity of deciding
  whether a given social choice function is strongly implementable has
  been open so far. In this paper, we show by using methods from polyhedral
  theory that strong implementability of a social choice function can be decided in
  polynomial space and that each of the payments needed for strong
  implementation can always be chosen to be of polynomial encoding
  length. Moreover, we show that strong implementability of a social
  choice function involving only a single selfish individual can be
  decided in polynomial time via linear programming.\\\\
  \bigskip
  \textbf{Keywords}: computational complexity, algorithmic game theory, mechanism design, social choice
\end{abstract}

\section{Introduction}

Mechanism design is a classical area of noncooperative game
theory~\cite{Osborne+Rubinstein:book} and
microeconomics~\cite{Mas-Colell+etal:book} which studies how privately
known preferences of several people can be aggregated towards a social
choice. Applications include the design of voting procedures,
the writing of contracts among parties, and the construction of
procedures for deciding upon public projects.  Recently, the study of
the Internet has fostered the interest in algorithmic aspects of
mechanism design \cite{Nisan+Ronen:mech-design}.

In the classical social choice setting considered in this paper, there
are $n$ selfish agents, which must make a collective decision from
some finite set~$X$ of possible social choices. Each agent~$i$ has a
private value~$\theta_i\in\Theta_i$ (called the agent's \emph{type}),
which influences the preferences of all agents over the alternatives
in $X$. Formally, this is modeled by a \emph{valuation
  function}~$V_i:X\times\Theta \rightarrow \mathbb{Q}$ for each
agent~$i$, where $\Theta=\Theta_1\times\dots\times\Theta_n$.  Every
agent~$i$ reports some information~$s_i$ from a set~$S_i$ of possible
\emph{bids} of $i$ to the mechanism designer who must then choose an
alternative from $X$ based on these bids.  The goal of the mechanism
designer is to implement a given social choice function~$f:\Theta
\rightarrow X$, that is, to make sure that the alternative~$f(\theta)$
is always chosen in equilibrium when the vector of true types is
$\theta=(\theta_1,\dots,\theta_n)$. To achieve this, the mechanism
designer hands out a payment~$P_i(\theta)$ to each agent~$i$,´ which
depends on the bids.  Each agent then tries to maximize the sum of her
valuation and payment by choosing an appropriate bid depending on her
type.  A \emph{mechanism}~$\Gamma=(S_1,\dots,S_n,g,P)$ is defined by
the sets $S_1,\dots,S_n$ of possible bids of the agents, an
\emph{outcome function}~$g:S_1\times\dots\times S_n \rightarrow X$,
and the payment scheme~$P=(P_1,\dots,P_n)$.

In the most common concept called \emph{weak}
implementation, the mechanism~$\Gamma$ is said to \emph{implement}
the social choice function~$f$ if \emph{some} (Bayesian) equilibrium of the
noncooperative game defined by the mechanism yields the outcome
specified by~$f$. An important result known
as the \emph{Revelation Principle}
(cf. \cite[p.~884]{Mas-Colell+etal:book}) states that a social choice
function is weakly implementable if and only if it can be
\emph{truthfully implemented} by a \emph{direct revelation mechanism},
which means that $f$ can be implemented by a mechanism with
$S_i=\Theta_i$ for all $i$ and truthful reporting as an equilibrium
that yields the outcome specified by $f$. As a result, the question
whether there exists a mechanism that weakly implements a given social
choice function~$f$ can be easily answered in time polynomial in $|\Theta|$
by checking for negative cycles in complete directed graphs on the
agents' type spaces with changes of valuations as edge weights
(cf.~\cite{Mueller+etal:weak-monotonicity}).

However, there is an obvious drawback in considering weak implementation:
Although a mechanism~$\Gamma$ may have \emph{some} equilibrium that yields the outcome
specified by~$f$, there may be \emph{other} equilibria that yield \emph{different}
outcomes. Thus, the concept of weak implementation heavily relies on the implicit assumption
that the agents always play the equilibrium that the mechanism designer wants if there
is more than one.

The standard way to avoid this problem is to consider the more robust concept of
implementation called \emph{strong implementation}. A mechanism~$\Gamma$ is said to
\emph{strongly implement} the social choice function~$f$ if
\emph{every} equilibrium of $\Gamma$ yields the outcome specified by~$f$.
For strong implementation, the Revelation Principle does not hold, so one cannot,
in general, restrict attention to direct revelation mechanisms and
truthful implementations when trying to decide whether a social choice
function is strongly implementable. In tackling the question whether a
given social choice function can be \emph{strongly} implemented, it is
not even a priori clear that one can restrict attention to finite
sets~$S_i$ or polynomially sized payments. Thus, to the best of our
knowledge, the complexity has been open so far. The main result
of this paper is that strong implementability of a social choice
function can be decided in polynomial space. In particular, if a
social choice function can be strongly implemented, our results show
that each of the payments in a mechanism that strongly implements it
can be chosen to be of polynomial encoding length. It seems unlikely
that the problem is contained in \textsf{NP}, at least the
characterizations of strong implementability developed so far require
an exponential number of (polynomially sized) certificates. We conjecture
that deciding strong implementability is in fact \textsf{PSPACE}-complete.
However, we show that the problem can be solved in polynomial time in case
of a single agent.

\paragraph{\bf{Problem Definition}}

We are given $n$ \emph{agents} identified with the
set~$N=\{1,\dots,n\}$ and a finite set~$X$ of possible \emph{social
  choices}.  For each agent~$i$, there is a finite set~$\Theta_i$ of
possible \emph{types} and we write
$\Theta=\Theta_1\times\dots\times\Theta_n$.  The true type~$\theta_i$
of agent~$i$ is known only to the agent herself. However, there is a
commonly known \emph{probability distribution}~$p:\Theta \rightarrow
\mathbb{Q}$ on $\Theta$ satisfying $p(\theta)\geq 0$ for all
$\theta\in\Theta$ and $\sum_{\theta\in\Theta}{p(\theta)}=1$. Thus,
every agent knows the probabilities of all possible
vectors~$\theta\in\Theta$ of types of all agents. Without loss of
generality, we assume the marginal probability~$p_i(\theta_i):=
\sum_{\theta_{-i}\in\Theta_{-i}}{p(\theta_{-i},\theta_i)}$ to be
strictly positive for every $\theta_i\in\Theta_i$. The beliefs of an
agent~$i$ of type~$\theta_i$ are then given by the (conditioned)
probability distribution~$q_i(.|\theta_i)$ over $\Theta_{-i}$ defined
by\small
\begin{equation*}
  q_i(\theta_{-i}|\theta_i):=\frac{p(\theta_{-i},\theta_i)}{p_i(\theta_i)}.
\end{equation*}\normalsize
All agents as well as the mechanism designer know these beliefs for
every $i$ and every $\theta_i$ since they know the probability
distribution~$p$. However, as mentioned above, only agent~$i$ knows
the true value of $\theta_i$.

Each agent~$i$ has a \emph{valuation function}~$V_i:X\times\Theta
\rightarrow \mathbb{Q}$, where $V_i(x,\theta)$ specifies the value
that agent~$i$ assigns to alternative~$x\in X$ when the types of the
agents are $\theta\in\Theta$. A \emph{social choice function} in this
setting is a function~$f:\Theta \rightarrow X$ that assigns an
alternative $f(\theta)\in X$ to every vector~$\theta$ of types.

\begin{definition}
  A \emph{mechanism}~$\Gamma=(S_1,\dots,S_n,g,P)$ consists of a
  set~$S_i$ of possible \emph{bids} for each agent~$i$, an
  \emph{outcome function}~$g:S \rightarrow X$ and a \emph{payment
    scheme}~$P:S \rightarrow \mathbb{Q}^n$, where
  $S:=S_1\times\dots\times S_n$.  $\Gamma$ is called a \emph{direct
    revelation mechanism} if $S_i=\Theta_i$ for all $i\in N$. We
  denote the direct revelation
  mechanism~$(\Theta_1,\dots,\Theta_n,f,P)$ defined by a social choice
  function~$f$ and a payment scheme~$P$ by $\Gamma_{(f,P)}$.

  A \emph{strategy} for agent~$i$ in the mechanism~$\Gamma$ is a
  function~$\alpha_i:\Theta_i \rightarrow S_i$ that defines a
  bid~$\alpha_i(\theta_i)\in S_i$ for every possible type $\theta_i$ of
  agent~$i$. A \emph{strategy profile} (in the mechanism~$\Gamma$) is
  an $n$-tuple~$\alpha=(\alpha_1,\dots\alpha_n)$ containing a strategy
  $\alpha_i$ for each agent~$i$.
\end{definition}

\begin{definition}\label{bayesianequ}
  Given a mechanism~$\Gamma=(S_1,\dots,S_n,g,P)$ and an
  $(n-1)$-tuple~$\alpha_{-i}$ of strategies for all agents except $i$,
  the \emph{expected utility} from a bid~$s_i\in S_i$ for agent~$i$
  when her type is $\theta_i$ is defined as
  \begin{equation*}\textstyle
    U^{\Gamma}_i(\alpha_{-i},s_i|\theta_i):= \sum_{\theta_{-i}\in\Theta_{-i}}
    {q_i(\theta_{-i}|\theta_i)
      \cdot\Big(V_i(g(\alpha_{-i}(\theta_{-i}),s_i),\theta)+P_i(\alpha_{-i}(\theta_{-i}),s_i)\Big)} 
	\end{equation*}
The strategy profile~$\alpha=(\alpha_1,\dots,\alpha_n)$ is a
\emph{(Bayesian) equilibrium} of $\Gamma$ if $\alpha_i(\theta_i)$ maximizes the
expected utility of an agent~$i$ of type~$\theta_i$ for every $i\in N$ and every
$\theta_i\in\Theta_i$, i.e., if
$\alpha_i(\theta_i)\in \operatorname{argmax}_{s_i\in S_i}{U^{\Gamma}_i(\alpha_{-i},s_i|\theta_i)}$ for all $i\in N,\theta_i\in\Theta_i$.
A direct revelation mechanism~$\Gamma_{(f,P)}$ is called \emph{incentive compatible} if truthful reporting is an equilibrium.
\end{definition}

Note that incentive compatibility of a mechanism is often defined as the property that truthful reporting is a \emph{dominant strategy equilibrium}.
However, since we consider \emph{Bayesian} equilibria, the definition given above is the natural choice in our setting.

\begin{definition}
  The mechanism~$\Gamma=(S_1,\dots,S_n,g,P)$ \emph{strongly
  implements} the social choice function~$f$ if
  \begin{enumerate}
  \item $\Gamma$ has at least one equilibrium,
  \item Every equilibrium~$\alpha$ of $\Gamma$ satisfies
    $g\circ\alpha=f$.
  \end{enumerate}
  The social choice function~$f$ is called \emph{strongly
  implementable} if there exists a mechanism~$\Gamma$ that strongly
  implements $f$.
\end{definition}


\begin{definition}[Strong Implementability Problem]\mbox{}\\
  \begin{tabularx}{\linewidth}{lX}
    INSTANCE: & The number~$n$ of agents, the set~$X$ of possible social choices, the sets~$\Theta_i$ of possible types of the
    agents, the valuation functions~$V_i:X\times\Theta \rightarrow \mathbb{Q}$, the probability distribution~$p$ over $\Theta$, and
    the social choice function~$f:\Theta \rightarrow X$. \\
    QUESTION: & Is $f$ strongly implementable? \
  \end{tabularx}
\end{definition}

To encode an instance of Strong Implementability, we need to do the
following: For every valuation function~$V_i:X\times\Theta
\rightarrow \mathbb{Q}$, we need to store $|X|\cdot|\Theta|$ rational
numbers. For the probability distribution~$p:\Theta\rightarrow
\mathbb{Q}$, we need to store $|\Theta|$ nonnegative rational
numbers. The social choice function~$f:\Theta \rightarrow X$ has
encoding length~$|\Theta|\cdot\log(|X|)$.
Thus, the encoding length of an instance of Strong Implementability
is in $\Omega(|X|\cdot|\Theta|\cdot n)$.

Observe that, in general, the encoding length is exponential in the number~$n$
of agents, since, even when each agent has just two possible types, we have $|\Theta|=2^n$.
It remains an interesting question for future research to study the problem and its complexity
if the valuation functions~$V_i$, the probability distribution~$p$, and the social choice function~$f$ are not explicitly specified
but can only be accessed via an oracle. In this case, one would be interested in the existence/non-existence
of oracle-polynomial time/space algorithms. We conjecture that exponential lower bounds can
be proved in this setting.

\section{Solvability in Polynomial Space}

In this section, we show that Strong Implementability is contained in
the complexity class~\textsf{PSPACE}.  One of the key ingredients of
our argumentation is a result due to Mookherjee and
Reichelstein~\cite{Mookherjee+Reichelstein:augmented-revelation}
called the \emph{Augmented Revelation Principle}. The mechanism design
setting considered
in~\cite{Mookherjee+Reichelstein:augmented-revelation} is similar to
ours. The proof of the Augmented Revelation Principle given
in~\cite{Mookherjee+Reichelstein:augmented-revelation} is focused on
the case where no payments are allowed in a mechanism. We now give a
proof for our setting of mechanisms \emph{with} payments.

\begin{definition}
  A mechanism~$\Gamma=(S_1,\dots,S_n,g,P)$ is called \emph{augmented
  revelation mechanism} if $S_i=\Theta_i\cup T_i$ for all $i\in N$
  and arbitrary sets~$T_i$.
\end{definition}

In the above definition, the elements of the set~$T_i$ represent
additional bids available to agent~$i$, in addition to her possible
types.

\begin{theorem}[The Augmented Revelation Principle]\label{augrefprinciple}\mbox{}\\
  If a social choice function~$f:\Theta \rightarrow X$ is strongly
  implementable, then $f$ can be strongly implemented by an augmented
  revelation mechanism in which truthful reporting is an equilibrium.
\end{theorem}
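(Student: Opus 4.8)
The plan is to start from a mechanism $\Gamma=(S_1,\dots,S_n,g,P)$ that strongly implements $f$, fix one of its equilibria $\alpha^{*}=(\alpha^{*}_1,\dots,\alpha^{*}_n)$ — this exists by the first condition in the definition of strong implementation, and it satisfies $g\circ\alpha^{*}=f$ by the second — and then use $\alpha^{*}$ to ``re-label'' the bids. Concretely, I would take $T_i$ to be a disjoint copy of $S_i$, set $S_i':=\Theta_i\cup T_i$, and define a translation map $\tau_i:S_i'\rightarrow S_i$ by $\tau_i(\theta_i):=\alpha^{*}_i(\theta_i)$ for $\theta_i\in\Theta_i$ and $\tau_i(t_i):=t_i$ for $t_i\in T_i$. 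The augmented revelation mechanism is then $\Gamma':=(S_1',\dots,S_n',g',P')$ with $g'(s'):=g(\tau_1(s_1'),\dots,\tau_n(s_n'))$ and $P'(s'):=P(\tau_1(s_1'),\dots,\tau_n(s_n'))$, so that agents ``pulling'' on the component $T_i$ can realize \emph{any} original bid. The crucial feature of this construction is that each $\tau_i$ is \emph{surjective} onto $S_i$; this is exactly what the extra bids $T_i$ buy, and it is the property that fails for an ordinary direct revelation mechanism (where one could only use $\tau_i=\alpha^{*}_i$, whose image may be a proper subset of $S_i$), which is why the classical Revelation Principle is not strong enough here.

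Next I would record the key identity relating expected utilities: writing $\tau_{-i}\circ\beta_{-i}$ for the profile $(\tau_j\circ\beta_j)_{j\neq i}$, one has, for every agent $i$, every type $\theta_i$, every strategy profile $\beta_{-i}$ of the other agents in $\Gamma'$, and every bid $s_i'\in S_i'$,
\[
  U^{\Gamma'}_i(\beta_{-i},s_i'|\theta_i)=U^{\Gamma}_i(\tau_{-i}\circ\beta_{-i},\tau_i(s_i')|\theta_i),
\]
which is immediate from the definitions, since $g'$ and $P'$ only see the translated bids. Combined with the surjectivity of $\tau_i$, this yields a tight correspondence between best responses in $\Gamma'$ and in $\Gamma$: $\max_{s_i'\in S_i'}U^{\Gamma'}_i(\beta_{-i},s_i'|\theta_i)=\max_{s_i\in S_i}U^{\Gamma}_i(\tau_{-i}\circ\beta_{-i},s_i|\theta_i)$, and $s_i'$ attains the left-hand maximum if and only if $\tau_i(s_i')$ attains the right-hand one.

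From here the three things to check are routine. First, truthful reporting (the strategy profile $\delta$ with $\delta_i(\theta_i)=\theta_i\in\Theta_i\subseteq S_i'$) is an equilibrium of $\Gamma'$: since $\tau_i\circ\delta_i=\alpha^{*}_i$, the identity above shows that agent $i$'s best-response problem against $\delta_{-i}$ in $\Gamma'$ coincides with her best-response problem against $\alpha^{*}_{-i}$ in $\Gamma$, which is solved by $\alpha^{*}_i(\theta_i)=\tau_i(\delta_i(\theta_i))$, i.e.\ by reporting $\theta_i$; along the way this shows $\Gamma'$ has an equilibrium and that it yields $g'\circ\delta=g\circ\alpha^{*}=f$. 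Second, if $\beta$ is \emph{any} equilibrium of $\Gamma'$, then $\gamma:=(\tau_1\circ\beta_1,\dots,\tau_n\circ\beta_n)$ is an equilibrium of $\Gamma$: applying the best-response correspondence for each $i$ and $\theta_i$ gives $\gamma_i(\theta_i)\in\operatorname{argmax}_{s_i\in S_i}U^{\Gamma}_i(\gamma_{-i},s_i|\theta_i)$. Third, since $\Gamma$ strongly implements $f$ we get $g\circ\gamma=f$, hence $g'\circ\beta=g\circ(\tau_1\circ\beta_1,\dots,\tau_n\circ\beta_n)=g\circ\gamma=f$; so every equilibrium of $\Gamma'$ yields $f$. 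Together with the first point, $\Gamma'$ strongly implements $f$ and has truthful reporting as an equilibrium.

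The only real subtlety — and the step I would be most careful about — is the second check: one must ensure that an equilibrium of the augmented mechanism \emph{does} push forward to an equilibrium of the original mechanism. This is precisely where surjectivity of $\tau_i$ is essential. Without the extra bids $T_i$, a ``spurious'' equilibrium of the plain direct mechanism could use only bids in $\alpha^{*}_i(\Theta_i)$ and be an equilibrium there while its image $\gamma$ violates the equilibrium condition in $\Gamma$ (because some profitable deviation of agent $i$ in $\Gamma$ requires a bid outside $\alpha^{*}_i(\Theta_i)$), and then nothing forces $g\circ\gamma=f$. Including $T_i=S_i$ closes exactly this gap; everything else is bookkeeping with the definition of Bayesian equilibrium.
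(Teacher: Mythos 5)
Your proof is correct and follows essentially the same route as the paper: fix an equilibrium $\alpha^{*}$ of a strongly implementing mechanism, relabel each type $\theta_i$ to the bid $\alpha^{*}_i(\theta_i)$ while keeping enough extra bids $T_i$ that the translation map is surjective, define the outcome and payment schemes by composition, and use surjectivity to push every equilibrium of the augmented mechanism forward to one of the original mechanism. The only (immaterial) difference is that you take $T_i$ to be a full disjoint copy of $S_i$, whereas the paper takes only the bids outside $\operatorname{image}(\alpha^{*}_i)$; both make the translation surjective, which is the point.
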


\begin{proof}
Let $\Gamma=(S_1,\dots,S_n,g,P)$ strongly implement $f$. We construct an augmented revelation
mechanism~$\bar{\Gamma}=(\bar{S}_1,\dots,\bar{S}_n,\bar{g},\bar{P})$ that strongly implements $f$ as in the original proof of the
Augmented Revelation Principle in \cite{Mookherjee+Reichelstein:augmented-revelation}. However, we additionally have to define the new payment
scheme~$\bar{P}$ in terms of the given payment scheme~$P$.
Given an arbitrary equilibrium~$\alpha=(\alpha_1,\dots,\alpha_n)$ of $\Gamma$, we define $\bar{S}_i:=\Theta_i\cup T_i$, where
\begin{displaymath}
T_i:=\{s_i\in S_i\,|\, s_i\notin \textrm{image}(\alpha_i)\}
\end{displaymath}
and $\textrm{image}(\alpha_i)=\{\alpha_i(\theta_i)\,|\,\theta_i\in\Theta_i\}$ denotes the image of the function~$\alpha_i:\Theta_i \rightarrow S_i$.
We consider the functions~$\phi_i:\bar{S}_i \rightarrow S_i$ given by
\begin{displaymath}
\phi_i(\bar{s_i}):=\left\{
	\begin{array}{cccc}
	\alpha_i(\theta_i) & \quad &  \textrm{if } \bar{s}_i=\theta_i \textrm{ for } \theta_i\in\Theta_i\\
	\bar{s}_i          & \quad &  \textrm{if } \bar{s}_i\in T_i\\
	\end{array}
	\right.
\end{displaymath}
and define the outcome function~$\bar{g}:\bar{S}\rightarrow X$ as $\bar{g}:=g\circ\phi$, where $\phi=(\phi_1,\dots,\phi_n)$.
The payment scheme~$\bar{P}:\bar{S} \rightarrow \mathbb{Q}$ is defined analogously as $\bar{P}:=P\circ\phi$.

To show that $\bar{\Gamma}$ strongly implements $f$, suppose that $\bar{\alpha}=(\bar{\alpha}_1,\dots,\bar{\alpha}_n)$ is
an equilibrium of $\bar{\Gamma}$ and consider the the strategy profile $\alpha^{\ast}=(\alpha^{\ast}_1,\dots,\alpha^{\ast}_n)$
in $\Gamma$ given by $\alpha^{\ast}_i:=\phi_i\circ\bar{\alpha}_i$.
By definition of $\alpha^{\ast}$, $\bar{g}$, and $\bar{P}$ we then have $g\circ\alpha^{\ast}=g\circ\phi\circ\bar{\alpha}=\bar{g}\circ\bar{\alpha}$
and $P\circ\alpha^{\ast}=P\circ\phi\circ\bar{\alpha}=\bar{P}\circ\bar{\alpha}$.
We claim that $\alpha^{\ast}$ is an equilibrium of $\Gamma$. Since every $\phi_i:\bar{S}_i \rightarrow S_i$ is surjective, we can choose, for
every fixed $s_i\in S_i$, an element~$\bar{s}_i\in \bar{S}_i$ with $\phi_i(\bar{s}_i)=s_i$.
Then, for every $i\in N$ and every possible $\theta_i\in\Theta_i$, we have
\begin{eqnarray}
	U^{\Gamma}_i(\alpha^{\ast}_{-i},\alpha^{\ast}_i(\theta_i)|\theta_i)
	&  =   & \sum_{\theta_{-i}\in\Theta_{-i}}{q_i(\theta_{-i}|\theta_i)
					 \biggl(V_i(g(\alpha^{\ast}(\theta)),\theta) + P_i(\alpha^{\ast}(\theta))\biggr)} \nonumber\\
	&  =   & \sum_{\theta_{-i}\in\Theta_{-i}}{q_i(\theta_{-i}|\theta_i)
					 \biggl(V_i(\bar{g}(\bar{\alpha}(\theta)),\theta) + \bar{P}_i(\bar{\alpha}(\theta))\biggr)} \nonumber\\
	& \geq & \sum_{\theta_{-i}\in\Theta_{-i}}{q_i(\theta_{-i}|\theta_i)
					 \biggl(V_i(\bar{g}(\bar{\alpha}_{-i}(\theta_{-i}),\bar{s}_i),\theta)
					 + \bar{P}_i(\bar{\alpha}_{-i}(\theta_{-i}),\bar{s}_i)\biggr)} \nonumber\\
	&  =   & \sum_{\theta_{-i}\in\Theta_{-i}}{q_i(\theta_{-i}|\theta_i)
					 \biggl(V_i(g(\alpha^{\ast}_{-i}(\theta_{-i}),\phi_i(\bar{s}_i)),\theta)
					 + P_i(\alpha^{\ast}_{-i}(\theta_{-i}),\phi_i(\bar{s}_i))\biggr)} \nonumber\\
	&  =   & \sum_{\theta_{-i}\in\Theta_{-i}}{q_i(\theta_{-i}|\theta_i)
					 \biggl(V_i(g(\alpha^{\ast}_{-i}(\theta_{-i}),s_i),\theta)
					 + P_i(\alpha^{\ast}_{-i}(\theta_{-i}),s_i)\biggr)} \nonumber\\
	&  =   & U^{\Gamma}_i(\alpha^{\ast}_{-i},s_i|\theta_i) \nonumber
\end{eqnarray}
where the inequality follows since $\bar{\alpha}$ is an equilibrium of $\bar{\Gamma}$. Since $s_i\in S_i$ was arbitrary, this shows that
$\alpha^{\ast}$ is an equilibrium of $\Gamma$ as claimed. So since $\Gamma$ strongly implements $f$, it follows that
$f=g\circ\alpha^{\ast}=\bar{g}\circ\bar{\alpha}$, i.e., the equilibrium $\bar{\alpha}$ yields the outcome specified by $f$.
Hence, it just remains to show that truthful bidding is an
equilibrium of $\bar{\Gamma}$. But this follows easily since, for every $\theta\in\Theta$, we have
$\bar{g}(\theta)=(g\circ\phi)(\theta)=g(\alpha(\theta))$ and $\bar{P}(\theta)=(P\circ\phi)(\theta)=P(\alpha(\theta))$ and $\alpha$
is an equilibrium of $\Gamma$.
\end{proof}

We can now proceed analogously
to~\cite{Mookherjee+Reichelstein:augmented-revelation} to show how the
Augmented Revelation Principle can be used to obtain a necessary
condition for strong implementability of a social choice
function. Mookherjee and Reichelstein already proved that this
condition is sufficient for mechanisms \emph{with} payments, so we
will in fact obtain a necessary \emph{and} sufficient condition in our
setting. Note, however, that the sufficiency result does \emph{not}
hold for mechanisms \emph{without} payments.
To formulate our necessary and sufficient condition for strong
implementability, we need the following definitions:

\begin{definition}
  An equilibrium~$\alpha=(\alpha_1,\dots,\alpha_n)$ in a direct
  revelation mechanism~$\Gamma_{(f,P)}$ can be \emph{selectively
  eliminated} if there exist an agent~$i\in N$ and
  functions~$h:\Theta_{-i} \rightarrow X$ and $\bar{P}_i:\Theta_{-i}
  \rightarrow \mathbb{Q}$ such that:
\begin{enumerate}
\item for some $\bar{\theta}_i\in\Theta_i$:
      $\sum_{\theta_{-i}\in\Theta_{-i}}{q_i(\theta_{-i}|\bar{\theta}_i)\cdot\Big(V_i(h(\alpha_{-i}(\theta_{-i})),\theta_{-i},\bar{\theta}_i)
      +\bar{P}_i(\alpha_{-i}(\theta_{-i}))\Big)}$\newline
    	\phantom{for some $\bar{\theta}_i\in\Theta_i$:}
    $> \sum_{\theta_{-i}\in\Theta_{-i}}{q_i(\theta_{-i}|\bar{\theta}_i)\cdot
      \Big(V_i(f(\alpha_{-i}(\theta_{-i}),\alpha_i(\bar{\theta}_i)),\theta_{-i},\bar{\theta}_i)
      +P_i(\alpha_{-i}(\theta_{-i}),\alpha_i(\bar{\theta}_i))\Big)}$
\item for all $\theta_i\in\Theta_i$: $\sum_{\theta_{-i}\in\Theta_{-i}}{q_i(\theta_{-i}|\theta_i)\cdot\Big(V_i(f(\theta),\theta)+P_i(\theta)
      -V_i(h(\theta_{-i}),\theta)-\bar{P}_i(\theta_{-i})\Big)} \geq 0$
\end{enumerate}
\end{definition}

In this definition, agent~$i$ is given a new bid, which we will call a
\emph{flag}. When agent~$i$ bids the flag and the other agents bid a
vector~$\theta_{-i}\in\Theta_{-i}$, the mechanism chooses the
outcome~$h(\theta_{-i})$ and hands out the payment
$\bar{P}_i(\theta_{-i})$ to agent~$i$. The first condition says that,
for some type~$\bar{\theta}_i\in\Theta_i$, agent~$i$ can increase her
expected utility by deviating from $\alpha_i$ to the flag. Thus,
$\alpha$ is not an equilibrium anymore. However, the second condition
ensures that agent~$i$ can not increase her expected utility by
deviating from truthful reporting to the flag in the case that all
other agents bid their true types. Hence, truthful reporting is
preserved as an equilibrium.

\begin{definition}
  An equilibrium~$\alpha$ of the direct revelation
  mechanism~$\Gamma_{(f,P)}$ is called \emph{bad} if $f\circ\alpha\neq
  f$. The direct revelation mechanism~$\Gamma_{(f,P)}$ satisfies the
  \emph{selective elimination condition} if every bad
  equilibrium~$\alpha$ can be selectively eliminated.
\end{definition}

\begin{theorem}\label{condition}
  Suppose that the social choice function~$f:\Theta \rightarrow X$ is
  strongly implementable. Then there exists an incentive compatible
  direct revelation mechanism~$\Gamma_{(f,P)}$ that satisfies the
  selective elimination condition.
\end{theorem}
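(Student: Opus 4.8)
The plan is to invoke the Augmented Revelation Principle (Theorem~\ref{augrefprinciple}) and then show that the direct revelation mechanism obtained by restricting the resulting augmented revelation mechanism to the type bids already has all the required properties. Concretely, since $f$ is strongly implementable, Theorem~\ref{augrefprinciple} provides an augmented revelation mechanism $\Gamma=(S_1,\dots,S_n,g,P)$ with $S_i=\Theta_i\cup T_i$ that strongly implements $f$ and in which truthful reporting is an equilibrium. Because truthful reporting is an equilibrium of $\Gamma$ and $\Gamma$ strongly implements $f$, we have $g(\theta)=f(\theta)$ for all $\theta\in\Theta$. Hence the restriction of $\Gamma$ to the type bids is exactly the direct revelation mechanism $\Gamma_{(f,P')}$, where $P'$ denotes the restriction of $P$ to $\Theta=\Theta_1\times\dots\times\Theta_n$. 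Shrinking the bid sets from $S_i$ to $\Theta_i$ cannot destroy the equilibrium property of truthful reporting, so $\Gamma_{(f,P')}$ is incentive compatible, and $P'$ plays the role of the payment scheme in the statement.

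It remains to check the selective elimination condition. First I would record the elementary observation that, for every agent~$i$, every type~$\theta_i$, every $(n-1)$-tuple~$\alpha_{-i}$ of strategies with values in $\Theta_{-i}$, and every bid $s_i\in\Theta_i$, the expected utilities $U^{\Gamma}_i(\alpha_{-i},s_i|\theta_i)$ and $U^{\Gamma_{(f,P')}}_i(\alpha_{-i},s_i|\theta_i)$ coincide; this is immediate from $g|_\Theta=f$ and $P'=P|_\Theta$ since all relevant bid vectors lie in $\Theta$. Now let $\alpha$ be an arbitrary bad equilibrium of $\Gamma_{(f,P')}$, i.e.\ $f\circ\alpha\neq f$. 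Viewed as a strategy profile in $\Gamma$ (with images in $\Theta_i\subseteq S_i$), $\alpha$ cannot be an equilibrium of $\Gamma$: otherwise, since $\Gamma$ strongly implements $f$, we would get $g\circ\alpha=f$, contradicting $g\circ\alpha=f\circ\alpha\neq f$. By the utility coincidence just noted, no deviation of any agent to a bid in $\Theta_i$ is profitable against $\alpha_{-i}$ in $\Gamma$, so there must exist an agent~$i$, a type~$\bar\theta_i\in\Theta_i$, and a flag $\bar s_i\in T_i$ with $U^{\Gamma}_i(\alpha_{-i},\bar s_i|\bar\theta_i) > U^{\Gamma}_i(\alpha_{-i},\alpha_i(\bar\theta_i)|\bar\theta_i)$.

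I would then define $h:\Theta_{-i}\rightarrow X$ by $h(\sigma_{-i}):=g(\sigma_{-i},\bar s_i)$ and $\bar P_i:\Theta_{-i}\rightarrow\mathbb{Q}$ by $\bar P_i(\sigma_{-i}):=P_i(\sigma_{-i},\bar s_i)$, and verify the two conditions. For condition~1, expanding $U^{\Gamma}_i(\alpha_{-i},\bar s_i|\bar\theta_i)$ reproduces exactly the left-hand side (with $h$ and $\bar P_i$ composed with $\alpha_{-i}$), while expanding $U^{\Gamma}_i(\alpha_{-i},\alpha_i(\bar\theta_i)|\bar\theta_i)$ reproduces the right-hand side, again using $(\alpha_{-i}(\theta_{-i}),\alpha_i(\bar\theta_i))\in\Theta$ together with $g|_\Theta=f$; thus condition~1 is precisely the profitable-deviation inequality found above. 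For condition~2, substituting the definitions of $h$ and $\bar P_i$ (now evaluated at $\theta_{-i}$ directly) turns the required inequality into $U^{\Gamma}_i(\mathrm{id}_{-i},\theta_i|\theta_i)\geq U^{\Gamma}_i(\mathrm{id}_{-i},\bar s_i|\theta_i)$ for every $\theta_i\in\Theta_i$, where $\mathrm{id}_{-i}$ denotes the truthful strategies of the other agents; this holds because truthful reporting is an equilibrium of $\Gamma$, so bidding $\bar s_i$ is never a profitable deviation for agent~$i$ when everyone else reports truthfully.

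I expect the main obstacle to be purely one of bookkeeping: matching the precise form of the selective elimination definition — in particular the asymmetry that in condition~1 the functions $h$ and $\bar P_i$ are evaluated at $\alpha_{-i}(\theta_{-i})$ whereas in condition~2 they are evaluated at $\theta_{-i}$ — and carefully establishing that the utilities of agent~$i$ in $\Gamma$ and in $\Gamma_{(f,P')}$ agree on all $\Theta_i$-bids, which is what forces a bad equilibrium of $\Gamma_{(f,P')}$ to fail to be an equilibrium of $\Gamma$ through a flag rather than through an ordinary type bid. Once these identifications are set up correctly, both conditions follow directly from the two facts that $\alpha$ is a bad equilibrium of $\Gamma_{(f,P')}$ and that truthful reporting is an equilibrium of $\Gamma$.
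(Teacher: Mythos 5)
Your proposal is correct and follows essentially the same route as the paper's proof: invoke Theorem~\ref{augrefprinciple}, restrict the resulting augmented revelation mechanism to the type bids to get an incentive compatible direct revelation mechanism, and observe that any bad equilibrium of it fails to be an equilibrium of the augmented mechanism only via a deviation to a non-type bid (flag), which then furnishes $h$ and $\bar{P}_i$, with condition~2 guaranteed by truthful reporting being an equilibrium. You merely spell out the bookkeeping (the utility coincidence on $\Theta$-bids and the explicit definitions $h(\cdot)=g(\cdot,\bar{s}_i)$, $\bar{P}_i(\cdot)=P_i(\cdot,\bar{s}_i)$) that the paper leaves implicit.
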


\begin{proof}
  By Theorem~\ref{augrefprinciple}, there exists an augmented
  revelation mechanism~$\Gamma=(S_1,\dots,S_n,g,P)$ that strongly
  implements $f$ and in which truthful reporting is an equilibrium. In
  particular, this implies that $g_{|\Theta}=f$.  We now show that the
  direct revelation mechanism~$\Gamma_{(f,P_{|\Theta})}$ is incentive
  compatible and satisfies the selective elimination
  condition. Incentive compatibility follows directly from the fact
  that truthful reporting is an equilibrium in $\Gamma$. Moreover, any
  bad equilibrium~$\alpha$ of $\Gamma_{(f,P_{|\Theta})}$ can not be an
  equilibrium of $\Gamma$ since this would contradict the fact that
  $\Gamma$ strongly implements $f$. Hence, in $\Gamma$ there must be a
  non-type message available to some agent~$i$ to which $i$ prefers to
  deviate when $\alpha$ is being played, without being tempted to do
  the same when all agents report truthfully.  Thus, any bad
  equilibrium~$\alpha$ can be selectively eliminated.
\end{proof}

Mookherjee and Reichelstein~\cite{Mookherjee+Reichelstein:augmented-revelation}
already proved that the condition from Theorem~\ref{condition} is also
sufficient for strong implementability of a social choice function~$f$
in settings where payments are allowed. The idea of the proof is
to start with an incentive compatible direct revelation mechanism and
eliminate the finitely many bad equilibria one after another in order
to obtain an augmented revelation mechanism that strongly
implements~$f$. However, one has to make sure that the augmentations
do not induce new (bad) equilibria. This is done by giving additional
bids (called \emph{counterflags}) to some agent~$j\neq i$ to
make sure that the newly introduced flag of agent~$i$ can never be
used in an equilibrium. To achieve this, the payment scheme has to be
modified appropriately.
Together with Theorem~\ref{condition}, the result of Mookherjee and
Reichelstein~\cite{Mookherjee+Reichelstein:augmented-revelation} proves:

\begin{theorem}\label{characterization}
  The social choice function~$f:\Theta \rightarrow X$ is strongly
  implementable if and only if there exists an incentive compatible
  direct revelation mechanism~$\Gamma_{(f,P)}$ that satisfies the
  selective elimination condition.
\end{theorem}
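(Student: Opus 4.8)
The plan is to prove the two directions separately, and in fact both are already essentially in hand given the results quoted above. The ``only if'' direction is precisely the content of Theorem~\ref{condition}: if $f$ is strongly implementable, that theorem already produces an incentive compatible direct revelation mechanism $\Gamma_{(f,P)}$ satisfying the selective elimination condition, so nothing further is needed here.

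For the ``if'' direction I would follow the construction of Mookherjee and Reichelstein~\cite{Mookherjee+Reichelstein:augmented-revelation}, adapted to our setting with payments. Starting from an incentive compatible $\Gamma_{(f,P)}$ that satisfies the selective elimination condition, I would first note that it has only finitely many equilibria — strategies are maps $\Theta_i \to \Theta_i$, so there are only finitely many strategy profiles altogether — and hence only finitely many bad equilibria $\alpha^{(1)},\dots,\alpha^{(m)}$. I would then eliminate them one at a time, maintaining the invariant that after step $k$ we have an augmented revelation mechanism in which truthful reporting is an equilibrium and none of $\alpha^{(1)},\dots,\alpha^{(k)}$ is an equilibrium. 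At step $k$, the selective elimination condition applied to $\alpha^{(k)}$ supplies an agent $i$ and functions $h$ and $\bar P_i$; I would add to $S_i$ a new \emph{flag} bid whose outcome against a profile $\theta_{-i}$ of the other agents is $h(\theta_{-i})$ and whose payment to $i$ is $\bar P_i(\theta_{-i})$. Condition~(1) then guarantees that type $\bar\theta_i$ of agent $i$ strictly prefers the flag to $\alpha^{(k)}_i(\bar\theta_i)$ when the others play $\alpha^{(k)}_{-i}$, so $\alpha^{(k)}$ is destroyed as an equilibrium, while condition~(2) guarantees that no type of $i$ gains by deviating to the flag when all others report truthfully, so truthful reporting survives. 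Moreover, since the bids added at earlier steps are still present, the profiles $\alpha^{(1)},\dots,\alpha^{(k-1)}$ — which use only type-bids — remain non-equilibria, so after $m$ steps no bad equilibrium of the original mechanism is left.

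The main obstacle, and the substantive part of the argument, is that adding a flag can \emph{create new} bad equilibria — in particular equilibria in which the flag is actually played — so flags cannot simply be added naively. The remedy I would import from~\cite{Mookherjee+Reichelstein:augmented-revelation} is to simultaneously hand some other agent $j\neq i$ a family of \emph{counterflag} bids together with an appropriately modified payment scheme, arranged so that whenever agent $i$ plays the flag, agent $j$ has a strictly profitable deviation to a counterflag, whereas the counterflags are never attractive to $j$ when $i$ does not play the flag. This forces the flag out of every equilibrium and keeps the counterflags themselves from opening up new equilibria, so the invariant is preserved through the iteration, and after the last step we obtain an augmented revelation mechanism that has truthful reporting as an equilibrium and in which every equilibrium yields the outcome specified by $f$, i.e.\ $f$ is strongly implementable. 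Carrying out this construction — choosing the counterflag outcomes and payments so that all the relevant inequalities are strict simultaneously, and checking that the successive modifications are mutually compatible — is the technical heart of the sufficiency direction; combining it with Theorem~\ref{condition} gives the stated equivalence.
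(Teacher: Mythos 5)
Your proposal follows essentially the same route as the paper: the ``only if'' direction is exactly Theorem~\ref{condition}, and the ``if'' direction is the flag/counterflag augmentation argument of Mookherjee and Reichelstein~\cite{Mookherjee+Reichelstein:augmented-revelation}, which the paper does not re-prove but cites (describing it in the same terms you sketch). Your outline of that construction is consistent with the paper's account, so there is no substantive difference in approach.
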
 

Theorem~\ref{characterization} is one of the keys to proving that the
Strong Implementability Problem is in \textsf{PSPACE} (in fact, our
proof shows that the problem is in \textsf{NPSPACE}, which equals
\textsf{PSPACE} by Savitch's Theorem). The idea of our proof is to use
the direct revelation mechanism~$\Gamma_{(f,P)}$ and, for every bad
equilibrium~$\alpha$ of $\Gamma_{(f,P)}$, the index~$i$, the pair of
functions~$(h,\bar{P}_i)$, and the type~$\bar{\theta}_i$ needed to
selectively eliminate $\alpha$, as a certificate for showing that $f$
is strongly implementable.  The algorithm first guesses the
polynomially many values~$P_i(\theta)$ and then enumerates all
possible strategy profiles~$\alpha$ to check which are good or bad
equilibria. If a bad equilibrium~$\alpha$ is found, the algorithm
guesses the data~$(i,h,\bar{P}_i,\bar{\theta}_i)$ needed to
selectively eliminate $\alpha$. However, in order to be able to run
this algorithm in polynomial space, we have to prove that the
certificates used in each step can be chosen to have only polynomial
encoding length.  In particular, we need to show that every
value~$P_i(\theta)$ of the payment functions and every
value~$\bar{P}_i(\theta_{-i})$ can be chosen to have polynomial
encoding length.

\begin{theorem}\label{polymechanism}
  The social choice function~$f:\Theta \rightarrow X$ is strongly
  implementable if and only if there exists an incentive compatible
  direct revelation mechanism~$\Gamma_{(f,P)}$ \emph{of polynomial
    encoding length} that satisfies the selective elimination
  condition.  In this case, for every (fixed) bad
  equilibrium~$\alpha$, the data~$(i,h,\bar{P}_i,\bar{\theta}_i)$
  needed to selectively eliminate $\alpha$ can be chosen to have
  polynomial encoding length.
\end{theorem}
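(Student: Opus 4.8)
The plan is to reduce ``validity'' of a payment scheme --- incentive compatibility together with the selective elimination condition --- to feasibility of a system of linear (in)equalities, and then to invoke standard size bounds from polyhedral theory. The \emph{if} direction of the equivalence is immediate: a direct revelation mechanism of polynomial encoding length is in particular a direct revelation mechanism, so if one exists that is incentive compatible and satisfies the selective elimination condition, Theorem~\ref{characterization} gives that $f$ is strongly implementable. For the \emph{only if} direction, Theorem~\ref{characterization} provides \emph{some} incentive compatible $\Gamma_{(f,P^{*})}$ satisfying the selective elimination condition, and the task is to replace $P^{*}$ by a payment scheme of polynomial encoding length. The starting observation is that, with all ``discrete data'' fixed, every relevant condition is linear in the payments: incentive compatibility of $\Gamma_{(f,P)}$ is a system of $\sum_i|\Theta_i|^{2}$ non-strict linear inequalities in the variables $P_i(\theta)$; ``a fixed strategy profile $\alpha$ is not an equilibrium of $\Gamma_{(f,P)}$'' is certified by a single strict linear inequality in the $P_i(\theta)$ (some agent $j$ of some type $\mu$ strictly prefers some deviation $\nu$); and, for fixed $i,h,\bar\theta_i$, the two conditions in the definition of selective elimination form a system of $|\Theta_i|+1$ linear (in)equalities, exactly one of them strict, in the $P_i(\theta)$ together with the auxiliary variables $\bar P_i(\theta_{-i})$.

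Using $P^{*}$, I would fix for each profile $\alpha$ with $f\circ\alpha\ne f$ a ``reason'': either a witness $(j,\mu,\nu)$ that $\alpha$ is not an equilibrium of $\Gamma_{(f,P^{*})}$, or --- if $\alpha$ \emph{is} a (necessarily bad) equilibrium of $\Gamma_{(f,P^{*})}$ --- a tuple $(i,h,\bar P_i^{*},\bar\theta_i)$ witnessing that it can be selectively eliminated. Collecting the corresponding (in)equalities yields a system $\mathcal{S}$ in the payment variables $P$ and, for each profile of the second kind, its own block of auxiliary variables $\bar P_\alpha$. By construction $\mathcal{S}$ is feasible (at $P^{*}$ together with the chosen $\bar P_\alpha^{*}$), and conversely \emph{any} feasible $P$ yields a mechanism $\Gamma_{(f,P)}$ that is incentive compatible and satisfies the selective elimination condition: a bad equilibrium of $\Gamma_{(f,P)}$ is one of the profiles $\alpha$ above; it cannot be of the first kind, since the strict inequality would contradict its being an equilibrium of $\Gamma_{(f,P)}$; hence it is of the second kind, and then both conditions hold, so it can be selectively eliminated.

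The remaining --- and technically central --- step is to extract from feasibility of $\mathcal{S}$ a feasible point in which $P$ has polynomial encoding length. Two properties of $\mathcal{S}$ make this possible. First, every (in)equality of $\mathcal{S}$ has coefficients and right-hand side of encoding length polynomial in the input, being built from boundedly nested sums of the valuations $V_i$ and the beliefs $q_i(\cdot\,|\cdot)$ (computed from $p$). Second, $\mathcal{S}$ has a block structure: each (in)equality involves only the $P$-variables and at most one of the vectors $\bar P_\alpha$; each $\bar P_\alpha$ has at most $|\Theta|$ coordinates and appears in at most $|\Theta|+1$ (in)equalities. The obstacle is that $\mathcal{S}$ has super-polynomially many variables (one block per bad-equilibrium candidate, of which there can be super-polynomially many) and super-polynomially many (in)equalities, so the standard fact that a feasible rational linear system has a feasible solution of encoding length polynomial in the \emph{number of variables} and the maximum encoding length of a single (in)equality cannot be applied directly. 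The fix is to first replace every strict inequality $a^{T}x>b$ by $a^{T}x\ge b+t$ for a single new scalar variable $t$ (and add $t\le 1$), and then to eliminate each block $\bar P_\alpha$ by projection: using the description of a polyhedral projection through the extreme rays of the projection cone $\{w\ge 0: w^{T}B_\alpha=0\}$, where $B_\alpha$ is the $\bar P_\alpha$-coefficient matrix and has at most $|\Theta|+1$ rows, the set of $(P,t)$ that extend to a solution of block $\alpha$ is cut out by (possibly many, but) linear inequalities whose coefficients still have polynomial encoding length, independently of $\alpha$. One is left with a feasible linear system in the $n|\Theta|+1$ variables $(P,t)$, each of whose (in)equalities has polynomial encoding length; maximizing $t$ over it, an optimal vertex has polynomial encoding length, and since $\mathcal{S}$ is feasible the optimum is positive, so the $P$-part of that vertex is a payment scheme of polynomial encoding length for which $\Gamma_{(f,P)}$ is the desired mechanism. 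Keeping the coefficient growth in the projection under control is the part that needs care.

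Finally, for the ``moreover'', fix a bad equilibrium $\alpha$ of the polynomial-encoding-length mechanism $\Gamma_{(f,P)}$ just constructed. Since $\Gamma_{(f,P)}$ satisfies the selective elimination condition, $\alpha$ can be selectively eliminated, so there exist $i,h,\bar P_i,\bar\theta_i$ satisfying the two conditions. Now $i\in N$ and $\bar\theta_i\in\Theta_i$ are trivially of polynomial encoding length, and so is \emph{any} function $h:\Theta_{-i}\to X$, because its encoding length is at most $|\Theta_{-i}|\cdot\log|X|\le|\Theta|\cdot\log|X|$, which is polynomial in the input size. Fixing such $i,h,\bar\theta_i$, the two conditions become a feasible linear system in the at most $|\Theta|$ variables $\bar P_i(\theta_{-i})$, with at most $|\Theta|+1$ (in)equalities whose coefficients --- now that $P$ has polynomial encoding length --- have polynomial encoding length; hence it has a solution $\bar P_i$ all of whose (polynomially many) values have polynomial encoding length, which gives the claimed bound on $(i,h,\bar P_i,\bar\theta_i)$.
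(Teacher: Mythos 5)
Your proposal is correct and follows the same overall strategy as the paper's proof: fix, from the mechanism guaranteed by Theorem~\ref{characterization}, the discrete witnesses (a deviation witness for each non-equilibrium candidate, and $(i,h,\bar{\theta}_i)$ together with elimination payments for each bad equilibrium), view everything as a block-structured linear system in the payments $P$ and the auxiliary payments $\bar{P}_\alpha$, project out the auxiliary blocks via the extreme rays of the (polynomially small, rational) projection cones so that the exponentially many projected inequalities still have polynomial encoding length, and then invoke standard polyhedral size bounds. The differences are in the technical devices rather than in the ideas: the paper additionally keeps the constraints forcing \emph{all} equilibria of the original mechanism to remain equilibria (your leaner system with only incentive compatibility plus per-profile witnesses works equally well); it handles the strict inequalities by passing to the closure polyhedron and extracting a relative interior point of polynomial size (Lemma~\ref{relativeinterior} together with the cited facet/vertex complexity results), whereas you introduce a slack variable $t$ and maximize it; it projects out all auxiliary variables at once and decomposes the extreme rays of the global cone (Lemma~\ref{extremerays}, Proposition~\ref{extremerayspolynomial}), whereas you project block by block, which is equivalent and arguably cleaner since the blocks share no auxiliary variables; and for the ``moreover'' clause the paper lifts the projected point blockwise into the relative interiors of the residual polyhedra $\tilde{\mathcal{P}}_j$, whereas you re-solve a small per-bad-equilibrium system after fixing the polynomial-size $P$ --- also fine, and it makes the polynomial bound on $h$ and $\bar{\theta}_i$ explicit. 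Two wording points need a routine fix, neither a real gap: the projected polyhedron in $(P,t)$ is not pointed (adding a constant to all payments of one agent, matched in the corresponding $\bar{P}_\alpha$ blocks before projection, is a lineality direction), so there need not be an ``optimal vertex''; instead argue, as the paper does, from polynomial facet complexity that the (nonempty, bounded-in-$t$) optimal face contains a point of polynomial encoding length. Likewise, the small per-bad-equilibrium system in your last step contains one strict inequality, so you must apply the same slack-variable (or relative-interior) device there before concluding that a polynomial-size $\bar{P}_i$ exists.
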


Note that Theorem~\ref{characterization} implies that, in order to
prove the if and only if statement in the claim, we just have to prove
that strong implementability of a social choice function~$f:\Theta
\rightarrow X$ implies the existence of a direct revelation mechanism
with the given properties.

So assume that the social choice function~$f$ is strongly
implementable. Then, by Theorem~\ref{characterization}, there exists
an incentive compatible direct revelation mechanism~$\Gamma_{(f,P)}$
(of possibly more than polynomial encoding length) that satisfies the
selective elimination condition. For each bad equilibrium~$\alpha$ of
$\Gamma_{(f,P)}$, we denote an index~$i$, functions~$h:\Theta_{-i}
\rightarrow X$ and $\bar{P}_i:\Theta_{-i} \rightarrow \mathbb{Q}$, and
a type~$\bar{\theta}_i\in\Theta_i$ that can be used to selectively
eliminate $\alpha$ by
$i_{\alpha},h^{\alpha},\bar{P}^{\alpha}_{i_{\alpha}}$, and
$\bar{\theta}^{\alpha}_{i_{\alpha}}$, respectively. Similarly, for
every strategy profile~$\alpha$ that is \emph{not} an equilibrium, we
denote an index~$i$ and a pair~$(\theta_i,\bar{\theta}_i)$ of types of
agent~$i$ such that
$U^{\Gamma}_i(\alpha_{-i},\bar{\theta}_i|\theta_i)>U^{\Gamma}_i(\alpha_{-i},\alpha_i(\theta_i)|\theta_i)$
by $i_{\alpha}$ and
$(\theta^{\alpha}_{i_{\alpha}},\bar{\theta}^{\alpha}_{i_{\alpha}})$,
respectively.

Note that the only part of the mechanism~$\Gamma_{(f,P)}$ that could
have more than polynomial encoding length is the payment
scheme~$P:\Theta\rightarrow \mathbb{Q}^n$, and, for every bad
equilibrium~$\alpha$, the only part of the
data~$(i_{\alpha},h^{\alpha},\bar{P}^{\alpha}_{i_{\alpha}},\bar{\theta}^{\alpha}_{i_{\alpha}})$
that could have more than polynomial encoding length is the
function~$\bar{P}^{\alpha}_{i_{\alpha}}:\Theta_{-i_{\alpha}}
\rightarrow \mathbb{Q}$.  Hence, we only have to show that every
value~$P_i(\theta)$ of the payment functions and every
value~$\bar{P}^{\alpha}_{i_{\alpha}}(\theta_{-i_{\alpha}})$ can be
chosen to have polynomial encoding length.

To do so, we assume that we are given $i_{\alpha}, h^{\alpha}$, and
$\bar{\theta}^{\alpha}_{i_{\alpha}}$ for every bad
equilibrium~$\alpha$ of $\Gamma_{(f,P)}$, and $i_{\alpha}$,
$(\theta^{\alpha}_{i_{\alpha}},\bar{\theta}^{\alpha}_{i_{\alpha}})$
for every strategy profile~$\alpha$ that is not an equilibrium and
consider the system of linear inequalities in the
variables~$P_i(\theta)$,
$\bar{P}^{\alpha}_{i_{\alpha}}(\theta_{-i_{\alpha}})$, for
$\theta\in\Theta,\,\theta_{-i_{\alpha}}\in\Theta_{-i_{\alpha}}$,
displayed on page~\pageref{thebigthing}.

\begin{sidewaystable}\normalsize
For all strategy profiles~$\alpha$ that are not equilibria:
\begin{eqnarray}\label{thebigthing}
	\sum_{\theta_{-i_{\alpha}}\in\Theta_{-i_{\alpha}}}q_{i_{\alpha}}(\theta_{-i_{\alpha}}|\theta^{\alpha}_{i_{\alpha}})\cdot
	\biggl(V_{i_{\alpha}}(f(\alpha_{-i_{\alpha}}(\theta_{-i_{\alpha}}),\bar{\theta}^{\alpha}_{i_{\alpha}}),
	\theta_{-i_{\alpha}},\theta^{\alpha}_{i_{\alpha}})
	+P_{i_{\alpha}}(\alpha_{-i_{\alpha}}(\theta_{-i_{\alpha}}),\bar{\theta}^{\alpha}_{i_{\alpha}})\biggr) \nonumber\\
	-\sum_{\theta_{-i_{\alpha}}\in\Theta_{-i_{\alpha}}}q_{i_{\alpha}}(\theta_{-i_{\alpha}}|\theta^{\alpha}_{i_{\alpha}})\cdot
	\biggl(V_{i_{\alpha}}(f(\alpha_{-i_{\alpha}}(\theta_{-i_{\alpha}}),\alpha_{i_{\alpha}}(\theta^{\alpha}_{i_{\alpha}})),
	\theta_{-i_{\alpha}},\theta^{\alpha}_{i_{\alpha}})
	+P_{i_{\alpha}}(\alpha_{-i_{\alpha}}(\theta_{-i_{\alpha}}),\alpha_{i_{\alpha}}(\theta^{\alpha}_{i_{\alpha}}))\biggr)
	& > & 0
\end{eqnarray}
For all equilibria~$\alpha$ and all $i\in N,\;\theta_i,\tilde{\theta}_i\in\Theta_i$:
\begin{eqnarray}
	\sum_{\theta_{-i}\in\Theta_{-i}}q_i(\theta_{-i}|\theta_i)\cdot
	\biggl(V_i(f(\alpha_{-i}(\theta_{-i}),\alpha_i(\theta_i)),\theta_{-i},\theta_i)+P_i(\alpha_{-i}(\theta_{-i}),\alpha_i(\theta_i))\biggr) \nonumber\\
	-\sum_{\theta_{-i}\in\Theta_{-i}}q_i(\theta_{-i}|\theta_i)\cdot
	\biggl(V_i(f(\alpha_{-i}(\theta_{-i}),\tilde{\theta}_i),\theta_{-i},\theta_i)+P_i(\alpha_{-i}(\theta_{-i}),\tilde{\theta}_i)\biggr)
	& \geq & 0
\end{eqnarray}
For all bad equilibria~$\alpha$:
\begin{eqnarray}
	\sum_{\theta_{-i_{\alpha}}\in\Theta_{-i_{\alpha}}}q_{i_{\alpha}}(\theta_{-i_{\alpha}}|\bar{\theta}^{\alpha}_{i_{\alpha}})\cdot
	\biggl(V_{i_{\alpha}}(h^{\alpha}(\alpha_{-i_{\alpha}}(\theta_{-i_{\alpha}})),\theta_{-i_{\alpha}},\bar{\theta}^{\alpha}_{i_{\alpha}})
	+\bar{P}^{\alpha}_{i_{\alpha}}(\alpha_{-i_{\alpha}}(\theta_{-i_{\alpha}}))\biggr) \nonumber\\
	-\sum_{\theta_{-i_{\alpha}}\in\Theta_{-i_{\alpha}}}q_{i_{\alpha}}(\theta_{-i_{\alpha}}|\bar{\theta}^{\alpha}_{i_{\alpha}})\cdot
	\biggl(V_{i_{\alpha}}(f(\alpha_{-i_{\alpha}}(\theta_{-i_{\alpha}}),\alpha_{i_{\alpha}}(\bar{\theta}^{\alpha}_{i_{\alpha}})),
	\theta_{-i_{\alpha}},\bar{\theta}_{i_{\alpha}})
	+P_{i_{\alpha}}(\alpha_{-i_{\alpha}}(\theta_{-i_{\alpha}}),\alpha_{i_{\alpha}}(\bar{\theta}^{\alpha}_{i_{\alpha}}))\biggr)
	&  >  & 0
\end{eqnarray}
For all bad equilibria~$\alpha$ and all $\theta_{i_{\alpha}}\in\Theta_{i_{\alpha}}$:
\begin{eqnarray}
	\sum_{\theta_{-i_{\alpha}}\in\Theta_{-i_{\alpha}}}q_{i_{\alpha}}(\theta_{-i_{\alpha}}|\theta_{i_{\alpha}})\cdot
	\biggl(V_{i_{\alpha}}(f(\theta_{-i_{\alpha}},\theta_{i_{\alpha}}),\theta_{-i_{\alpha}},\theta_{i_{\alpha}})
	+P_{i_{\alpha}}(\theta_{-i_{\alpha}},\theta_{i_{\alpha}})
	-V_{i_{\alpha}}(h^{\alpha}(\theta_{-i_{\alpha}}),\theta_{-i_{\alpha}},\theta_{i_{\alpha}})
	-\bar{P}^{\alpha}_{i_{\alpha}}(\theta_{-i_{\alpha}})\biggr)
	& \geq & 0
\end{eqnarray}
\end{sidewaystable}

Here, the inequalities~(1) and (2) encode exactly which strategy
profiles are equilibria of $\Gamma_{(f,P)}$ and (3), (4) correspond to
conditions~1. and 2. in the definition of selective elimination of an
equilibrium, respectively.  Note that the number of inequalities and
variables of the system is exponential in the size of the input of
Strong Implementability.  However, we know that the system has a
solution given by the values~$P_i(\theta),
\bar{P}^{\alpha}_{i_{\alpha}}(\theta_{-i_{\alpha}})$ given by the
mechanism~$\Gamma_{(f,P)}$ and the
functions~$\bar{P}^{\alpha}_{i_{\alpha}}$ specified by the selective
elimination condition for the bad equilibria of $\Gamma_{(f,P)}$.
Theorem~\ref{polymechanism} now follows immediately if we can prove
the following result:

\begin{proposition}\label{polynomialsolution}
  The system of inequalities has a solution in which each component
  has polynomial encoding length.
\end{proposition}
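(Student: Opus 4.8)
The plan is to use the standard fact that a rational linear system with a solution always has a solution of encoding length bounded polynomially in the encoding length of the system's coefficients — but to apply it to a \emph{sub}system of polynomial size rather than to the full (exponentially large) system. The key observation is that, although the system displayed on page~\pageref{thebigthing} has exponentially many inequalities and exponentially many variables, each individual inequality involves only polynomially many variables, and more importantly the whole system is \emph{sparse} in a structured way: the variables $P_i(\theta)$ for $\theta\in\Theta$ are genuinely only polynomially many, since $|\Theta_i|\le|\Theta|$ and there are $n$ of them — wait, that is not right in general since $|\Theta|$ is already the relevant input parameter. Let me instead argue via the coefficients directly: the coefficients appearing in the system are the numbers $q_i(\theta_{-i}|\theta_i)$ and the valuation differences $V_i(\cdot)-V_i(\cdot)$, all of which have encoding length polynomial in the input size of Strong Implementability. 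So the system $Ax<b,\ Cx\le d$ has a rational coefficient matrix each of whose entries has polynomial encoding length, even though $A,C$ have exponentially many rows and columns.

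First I would recall the following: if a system $Ax \le b$ (with some of the inequalities strict) over the rationals has a solution, then it has a solution $x$ such that the encoding length of each component $x_j$ is bounded by a polynomial in $\max_{i,j}\langle a_{ij}\rangle$, $\max_i\langle b_i\rangle$, and the number of variables. The dependence on the number of variables is the problem here. To remove it, I would pass to a vertex/basic-solution argument restricted to a cleverly chosen small subsystem. Concretely: since the original system is feasible, fix any feasible point $\hat{x}$ (given by the mechanism $\Gamma_{(f,P)}$ and the $\bar P^{\alpha}_{i_{\alpha}}$). Consider the set $J$ of variables $x_j$ that actually appear with a nonzero coefficient in \emph{some} inequality — but that is all of them. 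So instead I would argue inequality-by-inequality: group the variables. The variables split into the payment variables $P_i(\theta)$ and, for each bad equilibrium $\alpha$, the variables $\bar P^{\alpha}_{i_\alpha}(\theta_{-i_\alpha})$. Crucially, the $\bar P^{\alpha}$-variables for different $\alpha$ never appear together in a common inequality, and each inequality of type (3) or (4) contains the $\bar P^{\alpha}$-variables of a \emph{single} $\alpha$ together with (some of) the $P$-variables. Therefore the feasibility problem decomposes: it suffices to show that the subsystem consisting of (1), (2), and — for one fixed bad equilibrium $\alpha$ at a time — the instances of (3) and (4), admits a solution of polynomial encoding length in the $P$-variables \emph{simultaneously across all $\alpha$}. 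That simultaneity is the real issue, and I would handle it by first solving for the $P$-variables alone.

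So the key steps, in order, would be: (i) Observe that inequalities (1) and (2) involve only the $P$-variables. Using the known polynomial-size-solution bound for rational linear systems — applied after reducing to a basic feasible solution of an appropriate auxiliary LP — produce values $P_i(\theta)$ of polynomial encoding length satisfying (1) and (2); here one must be slightly careful because of the strict inequalities (1) and (3)–(4), which is handled in the standard way by perturbing to non-strict inequalities with a right-hand side that is itself of polynomial encoding length (e.g.\ replacing $>0$ by $\ge 2^{-L}$ for suitable polynomial $L$, or by an epsilon obtained from the vertex bound), or equivalently by working with the closure and arguing that a vertex of the closed feasible region that lies in the relative interior of the strict constraints still has polynomial size. (ii) With these polynomial-size $P_i(\theta)$ now \emph{fixed}, for each bad equilibrium $\alpha$ separately, the constraints (3) and (4) become a linear system in the variables $\bar P^{\alpha}_{i_\alpha}(\theta_{-i_\alpha})$ only, with coefficients of polynomial encoding length (the $q$'s, the $V$'s, and the now-fixed polynomial-size $P$'s); this system is feasible by Theorem~\ref{characterization}, so again by the standard bound it has a solution of polynomial encoding length. (iii) Combining the $P_i(\theta)$ from step (i) with the $\bar P^{\alpha}_{i_\alpha}$ from step (ii), over all bad equilibria $\alpha$, yields a solution of the full system in which every component has polynomial encoding length, which is exactly the claim of Proposition~\ref{polynomialsolution}.

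The main obstacle I anticipate is step (i): one must produce $P$-values of polynomial size that satisfy \emph{all} of (1), (2) \emph{and leave (3), (4) feasible for every bad $\alpha$}. It is not enough that (1)–(2) alone are satisfiable with small $P$; we need the small $P$ to be compatible with the existence of the $\bar P^{\alpha}$'s. I would circumvent this by noting that, for a fixed bad equilibrium $\alpha$, feasibility of (3)–(4) in the $\bar P^{\alpha}$-variables, for given $P$, is itself described by a system whose feasibility (by Farkas' lemma / LP duality, eliminating the $\bar P^{\alpha}$-variables) translates into finitely many linear inequalities on the $P$-variables with coefficients of polynomial encoding length. Adding all of these (projected) inequalities, over all bad equilibria $\alpha$, to (1) and (2) gives a single linear system \emph{purely in the $P$-variables} that (a) is feasible — witnessed by the original mechanism — and (b) has all coefficients of polynomial encoding length, even though it has exponentially many constraints. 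A basic feasible solution of (the non-strict perturbation of) this system then has polynomial encoding length by the standard bound, where one invokes that the bound depends only on the coefficient sizes and on the number of \emph{variables} $P_i(\theta)$ — and this count, while exponential in $n$, is polynomial in the input size $\Omega(|X|\cdot|\Theta|\cdot n)$ of Strong Implementability, which is all that is required. Then step (ii) proceeds on the now-fixed small $P$ exactly as above. The only remaining technical care is the consistent treatment of the strict inequalities, for which I would fix once and for all a threshold $\varepsilon>0$ of polynomial encoding length, guaranteed to exist by the vertex bound applied to the feasible polyhedron, and replace each strict ``$>0$'' by ``$\ge\varepsilon$''.
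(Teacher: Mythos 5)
Your proposal follows essentially the same route as the paper's proof: you eliminate the $\bar{P}^{\alpha}_{i_{\alpha}}$-variables blockwise (one bad equilibrium at a time), note that the projected system in the payment variables $P_i(\theta)$ consists of inequalities of polynomial encoding length because each block and its Farkas multipliers are of polynomial size, pick a polynomial-size point there, and then solve each block separately with $P$ fixed --- which is exactly the paper's projection argument via the extreme rays of the cone $Q$ (exploiting its block-diagonal structure) followed by the blockwise second stage. The remaining differences are cosmetic: where you propose a uniform threshold $\varepsilon$ for the strict inequalities, the paper instead works with relative-interior points of the closed polyhedra at both stages, which is precisely the fallback you yourself mention and avoids any circularity in justifying the size of $\varepsilon$.
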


We let $\mathcal{P}$ denote the polyhedron defined by this system of
inequalities when all strict inequalities are replaced by non-strict
inequalities. In other words, $\mathcal{P}$ is the topological closure
of the set of solutions of the system.  We denote the number of bad
equilibria by~$p\in\mathbb{N}$, the number of strategy profiles that
are not equilibria by~$q\in\mathbb{N}$, and the total number of
equilibria by~$m\in\mathbb{N}$.  Then, by multiplying all inequalities
by $-1$ and rearranging, we can write $\mathcal{P}$ as
\begin{equation*}
  \mathcal{P}=\{(x,y)\in\mathbb{R}^{n\cdot|\Theta|}\times\mathbb{R}^l:Ax+By\leq b\},
\end{equation*}
where the vector~$x\in\mathbb{R}^{n\cdot|\Theta|}$ represents the
$n\cdot|\Theta|$ variables~$P_i(\theta)$ and the
vector~$y\in\mathbb{R}^l$ represents the $
l:=\sum_{\alpha \textrm{ bad equ.}}{|\Theta_{-i_{\alpha}}|}$
variables~$\bar{P}^{\alpha}_{i_{\alpha}}(\theta_{-i_{\alpha}})$.

$A\in\textrm{Mat}(k\times n\cdot|\Theta|,\mathbb{Q})$ and
$B\in\textrm{Mat}(k\times l,\mathbb{Q})$ are the matrices given by the
coefficients of the variables~$P_i(\theta)$ and
$\bar{P}^{\alpha}_{i_{\alpha}}(\theta_{-i_{\alpha}})$ in the system,
respectively, where
$k:=q+m\cdot\sum_{i=1}^n{|\Theta_i|^2}+p+\sum_{\alpha\textrm{ bad equ.}}{|\Theta_{i_{\alpha}}|}$
is the total number of inequalities in the system. The
vector~$b\in\mathbb{Q}^k$ is given by the constant terms in the
system, which are all (sums of) products of valuations and
probabilities. In particular, each entry of the matrices $A$, $B$, and
of the vector~$b$ is of polynomial encoding length.

On our way to proving Proposition~\ref{polynomialsolution} and
Theorem~\ref{polymechanism}, we use some definitions and results from
polyhedral theory. In particular, we use the fact that the vertex and facet
complexity of a polyhedron are polynomially related to each other (see,
e.g.,~\cite{Groetschel+etal:book}).  The most
common use in our context will be to conclude that if each inequality
in a linear system (with potentially exponentially many inequalities)
has polynomial encoding length, then there exists a solution of polynomial
encoding length. We also use the property that
the encoding length of the sum of $2^{\text{poly}(n)}$ many
values~$x_i\in\mathbb{Q}$ is bounded by a polynomial in $\text{poly}(n)$ and
the encoding sizes of the~$x_i$.

\begin{definition}\label{projectiondef}
  The \emph{projection} of a
  polyhedron~$P\subseteq\mathbb{R}^n\times\mathbb{R}^m$ to
  $\mathbb{R}^n$ is defined as
  \begin{equation*}
    \operatorname{proj}_{\mathbb{R}^n}(P):=\{x\in\mathbb{R}^n:(x,y)\in
    P \text{ for some } y\in\mathbb{R}^m\}.
  \end{equation*}
\end{definition}

By standard results from polyhedral theory (cf. \cite{Nemhauser+Wolsey:book}),
the projection of the polyhedron~$\mathcal{P}$, which is the closure of the set
of solutions of our system of linear inequalities, to $\mathbb{R}^{n\cdot|\Theta|}$
can be written as
\begin{equation*}
  \operatorname{proj}_{\mathbb{R}^{n\cdot|\Theta|}}(\mathcal{P})
  :=\{x\in\mathbb{R}^{n\cdot|\Theta|}:r_{\lambda}^T(b-Ax)\geq0 
  \text{ for all } \lambda\in\Lambda\}, 
\end{equation*}
where $\{r_{\lambda}\}_{\lambda\in   \Lambda}:=\operatorname{extreme.rays}(Q)$
is the finite set of extreme rays of the polyhedron
$Q:=\{v\in\mathbb{R}^k_+:v^T B=0\}$.  In particular,
$\textrm{proj}_{\mathbb{R}^{n\cdot|\Theta|}}(\mathcal{P})$ is a
polyhedron.

\begin{lemma}\label{extremerays}
  Let $P_1\subseteq\mathbb{R}^n$ and $P_2\subseteq\mathbb{R}^m$ be
  polyhedra and $P=P_1\times P_2$.Then
  \begin{equation*}
    \operatorname{extreme.rays}(P)=\operatorname{extreme.rays}(P_1)\times\operatorname{extreme.rays}(P_2).
  \end{equation*}
\end{lemma}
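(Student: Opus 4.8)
The plan is to prove the two inclusions separately, working directly from the definition of an extreme ray of a polyhedron as a ray that cannot be written as a nontrivial conic combination of two rays of $P$ pointing in different directions. First I would fix notation: a ray of $P = P_1 \times P_2$ is a vector $r = (r_1, r_2) \in \mathbb{R}^n \times \mathbb{R}^m$ in the recession cone $\operatorname{rec}(P)$. Since $P = P_1 \times P_2$, the recession cone factors as $\operatorname{rec}(P) = \operatorname{rec}(P_1) \times \operatorname{rec}(P_2)$; this is the elementary observation that $x + r \in P$ for all $x \in P$ iff $x_1 + r_1 \in P_1$ for all $x_1 \in P_1$ and $x_2 + r_2 \in P_2$ for all $x_2 \in P_2$. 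So every ray of $P$ has the form $(r_1, r_2)$ with $r_j$ a ray (or zero) of $P_j$.

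For the inclusion $\operatorname{extreme.rays}(P_1) \times \operatorname{extreme.rays}(P_2) \subseteq \operatorname{extreme.rays}(P)$, I would take $r_1$ an extreme ray of $P_1$ and $r_2$ an extreme ray of $P_2$ and suppose, for contradiction, that $(r_1, r_2) = \lambda(u_1, u_2) + \mu(v_1, v_2)$ with $\lambda, \mu > 0$ and $(u_1,u_2), (v_1,v_2)$ rays of $P$ not positive multiples of $(r_1,r_2)$. Reading off coordinates gives $r_1 = \lambda u_1 + \mu v_1$ and $r_2 = \lambda u_2 + \mu v_2$; extremality of $r_1$ forces $u_1, v_1$ to be nonnegative multiples of $r_1$, and likewise $u_2, v_2$ nonnegative multiples of $r_2$. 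A short case analysis on the scalars then shows $(u_1,u_2)$ and $(v_1,v_2)$ must both be positive multiples of $(r_1,r_2)$, contradicting the assumption. The main subtlety here is handling the possibility that one of the scalars is zero — e.g. $u_1 = 0$ while $u_2 \neq 0$ — so some care with the definition of ``extreme ray'' (normalizing or using the edge characterization: $r$ is extreme iff $\{x : Ax \le b, A_= x = A_= (x+r)\}$ is one-dimensional for the appropriate equality subsystem) is needed; I expect this bookkeeping to be the only real obstacle.

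For the reverse inclusion, I would take an extreme ray $r = (r_1, r_2)$ of $P$ and show each $r_j$ is either zero or an extreme ray of $P_j$, and moreover that not both can be zero. If, say, $r_1$ were a non-extreme nonzero ray of $P_1$, write $r_1 = \lambda u_1 + \mu v_1$ with $u_1, v_1$ rays of $P_1$ in different directions and $\lambda,\mu > 0$; then $(r_1, r_2) = \lambda(u_1, \tfrac{1}{2\lambda} r_2) + \mu(v_1, \tfrac{1}{2\mu} r_2)$ exhibits $r$ as a conic combination of two rays of $P$ in different directions (since $u_1 \ne v_1$ up to scaling), contradicting extremality of $r$. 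The case $r_1 = 0$: then $r = (0, r_2)$ and extremality of $r$ in $P$ forces $r_2$ to be an extreme ray of $P_2$ by a similar splitting argument, but since the statement claims $\operatorname{extreme.rays}(P) = \operatorname{extreme.rays}(P_1) \times \operatorname{extreme.rays}(P_2)$ as a set of pairs, I should double-check the intended convention — presumably extreme rays of $P_j$ are taken to be nonzero, and the product is understood accordingly; under that reading $r_1 = 0$ cannot occur for an extreme ray of $P$ when $P_1$ has at least one extreme ray, and the degenerate case where $P_1$ or $P_2$ has no extreme rays (a linear subspace direction) would need a remark. Assuming the polyhedra in our application ($Q = Q_1 \times Q_2$ with both sitting in the nonnegative orthant, hence pointed cones) are pointed, these degeneracies do not arise and the clean product formula holds; I would state that pointedness assumption explicitly if the proof uses it.
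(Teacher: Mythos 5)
The step you defer as ``bookkeeping'' in the first inclusion is not bookkeeping: it is exactly where the argument collapses, because the inclusion $\operatorname{extreme.rays}(P_1)\times\operatorname{extreme.rays}(P_2)\subseteq\operatorname{extreme.rays}(P)$ is false under the standard definition of an extreme ray, even for pointed cones. Take $P_1=P_2=\mathbb{R}_+$, so $P=\mathbb{R}^2_+$: the pair $(1,1)$ built from extreme rays of the factors is not an extreme ray of $P$, since $(1,1)=(1,0)+(0,1)$ and neither summand is a multiple of $(1,1)$. This is completely general: whenever $r_1\neq 0$ and $r_2\neq 0$ are recession directions of $P_1$ and $P_2$, the decomposition $(r_1,r_2)=(r_1,0)+(0,r_2)$ writes $(r_1,r_2)$ as a sum of two rays of $P$ in different directions, so $(r_1,r_2)$ is never extreme. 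In your case analysis this is precisely the configuration $u=(a\,r_1,0)$, $v=(0,b\,r_2)$ that you hoped to rule out; it cannot be ruled out. Your closing hedge (``for pointed cones the clean product formula holds'') is therefore also wrong: $\mathbb{R}^2_+$ is pointed. The correct statement, which your reverse-inclusion argument is actually groping towards, is that for polyhedra with pointed recession cones $\operatorname{extreme.rays}(P_1\times P_2)=\bigl(\operatorname{extreme.rays}(P_1)\times\{0\}\bigr)\cup\bigl(\{0\}\times\operatorname{extreme.rays}(P_2)\bigr)$; in particular your claim that $r_1=0$ ``cannot occur'' for an extreme ray of $P$ has it exactly backwards --- every extreme ray of a product has all but one block equal to zero.

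For the comparison with the paper: the source states the lemma in this loose product form and contains no proof of it, and it subsequently asserts that every extreme ray of $Q$ has the form $(e_i,r_1,\dots,r_p)$ with every block an extreme ray of the corresponding factor; under the corrected statement the form is instead ``all blocks zero except one, which is a unit vector $e_i$ or an extreme ray of some $Q_j$''. This does not damage the downstream argument --- Proposition~\ref{extremerayspolynomial} only needs that every entry of every extreme ray of $Q$ has polynomial encoding length, and the corrected description makes that bound even easier --- but a proof of the lemma as literally stated is impossible, so your write-up should prove (and the paper should invoke) the union-with-zero-padding version, under an explicit pointedness hypothesis, which is satisfied here because $Q\subseteq\mathbb{R}^k_+$.
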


\begin{lemma}\label{relativeinterior}
  Let $K^{\textrm{ri}}$ denote the \emph{relative interior} of a
  convex set~$K$.  If $P\subseteq\mathbb{R}^n\times\mathbb{R}^m$ is a
  polyhedron, then
  $\textrm{proj}_{\mathbb{R}^n}(P)^{\textrm{ri}}=\textrm{proj}_{\mathbb{R}^n}(P^{\textrm{ri}})$.
  Moreover, if $P=P_1\times P_2$ for polyhedra
  $P_1\subseteq\mathbb{R}^n$ and $P_2\subseteq\mathbb{R}^m$, then
  $P^{\textrm{ri}}=P_1^{\textrm{ri}}\times P_2^{\textrm{ri}}$.
\end{lemma}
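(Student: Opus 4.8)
The plan is to prove the two assertions separately, using throughout the standard characterization $x\in K^{\textrm{ri}}$ if and only if $x\in\textrm{aff}(K)$ and $(x+\varepsilon B)\cap\textrm{aff}(K)\subseteq K$ for some $\varepsilon>0$ (with $B$ the closed unit ball), together with the \emph{line-segment principle}: for a convex set $K$, if $w\in K^{\textrm{ri}}$ and $u\in K$, then $(1-\lambda)w+\lambda u\in K^{\textrm{ri}}$ for every $\lambda\in[0,1)$. The empty/degenerate cases are handled trivially, so we may assume all polyhedra involved are nonempty (hence have nonempty relative interior).

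For the product statement, I would first verify that $\textrm{aff}(P_1\times P_2)=\textrm{aff}(P_1)\times\textrm{aff}(P_2)$: an affine combination of points $(x^{(j)},y^{(j)})\in P_1\times P_2$ is the pair formed by the corresponding affine combination in $P_1$ and the one in $P_2$, and conversely affine combinations can be taken block-wise. Since a Euclidean ball around $(x,y)$ in $\mathbb{R}^n\times\mathbb{R}^m$ both contains a product of balls around $x$ and around $y$ and is contained in such a product, one checks directly that $(x,y)$ has a relative neighbourhood (with respect to $\textrm{aff}(P_1)\times\textrm{aff}(P_2)$) inside $P_1\times P_2$ precisely when $x$ has one inside $P_1$ and $y$ has one inside $P_2$; hence $(P_1\times P_2)^{\textrm{ri}}=P_1^{\textrm{ri}}\times P_2^{\textrm{ri}}$. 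This part is pure bookkeeping.

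For the projection statement, write $\pi:=\textrm{proj}_{\mathbb{R}^n}$, which is linear. The inclusion $\pi(P^{\textrm{ri}})\subseteq\pi(P)^{\textrm{ri}}$ would be proved directly: for $z\in P^{\textrm{ri}}$, the set $U:=(z+\varepsilon B)\cap\textrm{aff}(P)$ is a relatively open neighbourhood of $z$ contained in $P$. Because $\pi$ is affine, $\pi(\textrm{aff}(P))=\textrm{aff}(\pi(P))$ and the restriction of $\pi$ to $\textrm{aff}(P)$ is a surjective affine map onto $\textrm{aff}(\pi(P))$, hence an open map; so $\pi(U)$ is a relatively open neighbourhood of $\pi(z)$ inside $\pi(P)$, i.e.\ $\pi(z)\in\pi(P)^{\textrm{ri}}$.

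The reverse inclusion $\pi(P)^{\textrm{ri}}\subseteq\pi(P^{\textrm{ri}})$ is the crux, and the line-segment principle does the work. Fix some $w\in P^{\textrm{ri}}$ and set $x_0:=\pi(w)$, which lies in $\pi(P)^{\textrm{ri}}$ by the inclusion just established. Given $x\in\pi(P)^{\textrm{ri}}$, since $x_0\in\pi(P)$ and $x$ is relatively interior we may push $x$ slightly away from $x_0$ within $\pi(P)$: there is $\mu>1$ with $x':=x_0+\mu(x-x_0)\in\pi(P)$. Pick any preimage $z'\in P$ of $x'$ and put $z:=(1-\tfrac1\mu)\,w+\tfrac1\mu\,z'\in P$; then $\pi(z)=(1-\tfrac1\mu)x_0+\tfrac1\mu x'=x$, and since $\tfrac1\mu\in(0,1)$ the line-segment principle (applied to $w\in P^{\textrm{ri}}$ and $z'\in P$) gives $z\in P^{\textrm{ri}}$, so $x\in\pi(P^{\textrm{ri}})$. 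The only genuine obstacle is exactly this last step: one must produce a preimage of $x$ that itself lies in $P^{\textrm{ri}}$, which fails for an arbitrary preimage and is precisely what the convex combination with the interior point $w$ arranges; everything else reduces to manipulation of affine hulls and balls.
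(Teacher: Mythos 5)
Your proof is correct: the product identity via affine hulls and product neighbourhoods, and the projection identity via the open-mapping argument for one inclusion and the line-segment principle (pulling a point back through a fixed relative-interior point $w$) for the other, are exactly the standard convex-analysis arguments for these facts (cf.\ Rockafellar's results that a linear image of a relative interior is the relative interior of the image, and that relative interiors commute with products). The paper states this lemma without proof, relying on precisely these standard results, so your argument supplies the omitted details along essentially the same lines.
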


%
%

The following proposition is crucial for the proof of
Proposition~\ref{polynomialsolution}.  We use $\operatorname{ker}(C)$
to denote the kernel~$\{ \,z: Cz=0\,\}$ of a matrix~$C$.

\begin{proposition}\label{extremerayspolynomial}
  Each entry of an extreme ray~$r_{\lambda}$ of the polyhedron~$Q$ has
  encoding length polynomial in the input size of the Strong
  Implementability Problem.
\end{proposition}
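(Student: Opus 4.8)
The plan is to exploit a block‑diagonal structure of the matrix~$B$ in order to reduce the computation of an extreme ray of~$Q$ to a \emph{single low‑dimensional} cone, and then to invoke Cramer's rule. First I would note that each variable~$\bar{P}^{\alpha}_{i_{\alpha}}(\theta_{-i_{\alpha}})$ occurs only in the inequalities~(3) and~(4) belonging to the \emph{same} bad equilibrium~$\alpha$, whereas the inequalities~(1) and~(2) contain no $y$‑variable at all. Reordering rows and columns accordingly, $B$ becomes block diagonal: writing $v\in\mathbb{R}^{k}$ as $v=(v_{0},(v^{\alpha})_{\alpha\text{ bad}})$, where $v_{0}$ collects the coordinates of the inequalities~(1) and~(2) and $v^{\alpha}$ collects the $1+|\Theta_{i_{\alpha}}|$ coordinates of the inequalities~(3) and~(4) for~$\alpha$, the condition $v^{T}B=0$ is equivalent to $(v^{\alpha})^{T}B^{\alpha}=0$ for every bad equilibrium~$\alpha$, where $B^{\alpha}$ denotes the corresponding diagonal block. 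Hence
\[
  Q=\mathbb{R}^{k_{0}}_{+}\times\prod_{\alpha\text{ bad}}Q^{\alpha},\qquad
  Q^{\alpha}:=\{v^{\alpha}\in\mathbb{R}^{1+|\Theta_{i_{\alpha}}|}_{+}:(v^{\alpha})^{T}B^{\alpha}=0\},
\]
where $k_{0}$ is the number of inequalities of types~(1) and~(2). Each factor is a pointed polyhedral cone, the ambient dimension $1+|\Theta_{i_{\alpha}}|$ of~$Q^{\alpha}$ is polynomial in the input size, and every entry of~$B^{\alpha}$ is an entry of~$B$ and therefore of polynomial encoding length.

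Next I would use this product structure (cf.\ Lemma~\ref{extremerays}) to conclude that every extreme ray of~$Q$ is supported on the coordinates of a \emph{single} factor: since all factors are pointed cones, an extreme ray cannot have nonzero components in two different factors (otherwise it would be the sum of two non‑parallel elements of~$Q$), so it is either a unit vector belonging to~$\mathbb{R}^{k_{0}}_{+}$ (encoding length~$O(1)$), or the embedding into the block coordinates of an extreme ray of some~$Q^{\alpha}$. This is the crucial point, since it prevents an extreme ray from spreading over the (possibly exponentially many) blocks and thereby keeps the relevant cone polynomial‑dimensional.

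Finally I would bound the encoding length of an extreme ray~$r$ of a fixed~$Q^{\alpha}$. Writing $Q^{\alpha}=\{v:M^{\alpha}v\le 0\}$, where $M^{\alpha}$ is the matrix obtained by stacking $-I$, $(B^{\alpha})^{T}$, and $-(B^{\alpha})^{T}$ on top of each other, all entries of~$M^{\alpha}$ lie in $\{0,\pm1\}\cup\{\pm B_{jk}\}$ and hence have polynomial encoding length. By the standard characterisation of extreme rays of a pointed cone in~$\mathbb{R}^{1+|\Theta_{i_{\alpha}}|}$, $r$ spans the one‑dimensional kernel of a submatrix~$M'$ formed by $|\Theta_{i_{\alpha}}|$ linearly independent rows of~$M^{\alpha}$, and by Cramer's rule $r$ may be chosen so that each of its entries is, up to sign, a maximal minor of~$M'$, i.e.\ the determinant of a square matrix of order at most $1+|\Theta_{i_{\alpha}}|$ with entries of polynomial encoding length. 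By Hadamard's inequality such a determinant has polynomial encoding length; hence so does~$r$, and thus every extreme ray of~$Q$, as claimed.

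The step I expect to be the main obstacle is the first one --- recognising and rigorously verifying that $B$ is block diagonal with blocks of polynomially bounded \emph{row} dimension and that this structure is inherited by~$Q$ itself. This is exactly what the argument needs: a direct application of Cramer's rule to $Q$ (which has exponentially many rows, one per bad equilibrium among others) would produce determinants of exponential order and hence yield only an exponential bound, whereas the block decomposition confines each extreme ray to one cone of polynomial dimension. Once the decomposition is in place, the remainder is routine polyhedral bookkeeping, using the facts recalled in the excerpt (polynomially related vertex/facet complexity, and polynomial bounds on sums and on determinants of polynomially encoded matrices).
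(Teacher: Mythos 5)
Your proposal is correct and takes essentially the same route as the paper: it exploits the block structure of $B^T$ (zero columns for the inequalities~(1)--(2) and one block $B_j^T$ per bad equilibrium) to write $Q$ as the product of a nonnegative orthant with cones of polynomial dimension, and then bounds the entries of extreme rays of these small cones using standard encoding-length arguments. The only cosmetic difference is that you justify the reduction to a single block directly via pointedness and spell out the Cramer/Hadamard bound, where the paper invokes its Lemma~\ref{extremerays} on products and cites standard polyhedral facts.
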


\begin{proof}
$Q$ can be written as $Q=\{v\in\mathbb{R}^k_+:B^T v=0\}=\textrm{ker}(B^T)\cap\mathbb{R}^k_+$,
where the matrix $B^T$ has the form\small
\begin{equation}\label{eq:1}
B^T=
\left(\begin{array}{c@{\qquad}|ccccc}
& B_1^T & \quad & \quad && \text{\large $0$} \\
& \quad & B_2^T & \quad \\
\quad\text{\Large $0$}& \quad & \quad & \ddots & \quad \\
&  & \quad & \quad & B_{p-1}^T\\
& \text{\large $0$} & \quad & \quad && B_p^T
\end{array}\right) 
= 
\left(\begin{array}{c|c}
0& \tilde{B}
\end{array}\right) 
\end{equation}\normalsize
Here,
$B_j\in\textrm{Mat}((1+|\Theta_{i_{\alpha_j}}|)\times|\Theta_{-i_{\alpha_j}}|,\mathbb{Q})$
is the coefficient matrix of the $|\Theta_{-i_{\alpha_j}}|$
variables~$\bar{P}^{\alpha_j}_{i_{\alpha_j}}(\theta_{-i_{\alpha_j}})$,
$\theta_{-i_{\alpha_j}}\in\Theta_{-i_{\alpha_j}}$, for the bad
equilibrium~$\alpha_j$ in the inequalities (3) and (4) corresponding
to $\alpha_j$. Hence, we can write $Q$ as
\begin{equation*}
  Q=\mathbb{R}^{q+m\cdot\sum_{i=1}^n{|\Theta_i|^2}}_+ \times
  \left(\textrm{ker}(\tilde{B})\cap\mathbb{R}_+^{p+\sum_{\alpha \textrm{ bad equ.}}{|\Theta_{i_{\alpha}}|}}\right)
\end{equation*}
where $\tilde{B}\in\textrm{Mat}(l\times (p+\sum_{\alpha \textrm{ bad
    equ.}}{|\Theta_{i_{\alpha}}|}),\mathbb{Q})$ is the submatrix of
$B^T$ obtained by deleting the $q+m\cdot\sum_{i=1}^n{|\Theta_i|^2}$
zeros at the beginning of each line as shown in~(\ref{eq:1}).
Hence, by Lemma~\ref{extremerays}, every extreme ray~$r_{\lambda}$ of
$Q$ is of the form~$r_{\lambda}=(e_i,r)$, where $e_i$ is a unit vector
in $\mathbb{R}^{q+m\cdot\sum_{i=1}^n{|\Theta_i|^2}}$ (which are the
extreme rays of $\mathbb{R}^{q+m\cdot\sum_{i=1}^n{|\Theta_i|^2}}_+$)
and $r$ is an extreme ray of
$\textrm{ker}(\tilde{B})\cap\mathbb{R}_+^{p+\sum_{\alpha \textrm{ bad
      equ.}}{|\Theta_{i_{\alpha}}|}}$.

To obtain a description of the extreme rays of
$\textrm{ker}(\tilde{B})\cap\mathbb{R}_+^{p+\sum_{\alpha \textrm{ bad equ.}}{|\Theta_{i_{\alpha}}|}}$, note that $\tilde{B}$ is the
direct sum of the matrices~$B_1^T,\dots,B_p^T$, so we have
\begin{equation*}
  \textrm{ker}(\tilde{B})\cap\mathbb{R}_+^{p+\sum_{\alpha \textrm{ bad equ.}}{|\Theta_{i_{\alpha}}|}}=
  \bigoplus_{j=1}^p\left(\textrm{ker}(B_j^T)\cap\mathbb{R}^{1+|\Theta_{i_{\alpha_j}}|}_+\right).
\end{equation*}
Thus, again by Lemma~\ref{extremerays}, every extreme ray~$r_{\lambda}$ of $Q$ is of the form~$r_{\lambda}=(e_i,r_1,\dots,r_p)$, where
$e_i$ is a unit vector in $\mathbb{R}^{q+m\cdot\sum_{i=1}^n{|\Theta_i|^2}}$ and
$r_j\in\mathbb{R}^{1+|\Theta_{i_{\alpha_j}}|}$ is an extreme ray of $Q_j:=\textrm{ker}(B_j^T)\cap\mathbb{R}^{1+|\Theta_{i_{\alpha_j}}|}_+$
for every $j=1,\dots,p$. Hence, it just remains to show that, for every $j=1,\dots,p$, each entry of an extreme ray~$r_j$ of $Q_j$
has polynomial encoding length. This follows by writing $Q_j=\{v\in\mathbb{R}^{1+|\Theta_{i_{\alpha_j}}|}_+: B_j^T v=0\}$
and the fact that the encoding length of $B_j$ is polynomial for every $j$.
\end{proof}

\begin{proposition}\label{projectioncomplexity}
The facet complexity of $\textrm{proj}_{\mathbb{R}^{n\cdot|\Theta|}}(\mathcal{P})$ is polynomial in the encoding length of the input
of Strong Implementability.
\end{proposition}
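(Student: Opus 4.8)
The plan is to take the extreme-ray description of the projection derived above,
\begin{equation*}
  \operatorname{proj}_{\mathbb{R}^{n\cdot|\Theta|}}(\mathcal{P})
  = \bigl\{\, x \,:\, r_{\lambda}^T A x \le r_{\lambda}^T b \ \text{ for all } \lambda\in\Lambda \,\bigr\},
\end{equation*}
and to show that \emph{each} of these inequalities has encoding length polynomial in the input size of Strong Implementability. Recall that a polyhedron has facet complexity at most $\varphi$ if it admits a description by a system of inequalities each of encoding length at most $\varphi$ (with $\varphi$ at least the dimension, here $n\cdot|\Theta|$, which is polynomial). Hence it is enough to exhibit one such description --- the above one --- all of whose inequalities are of polynomial encoding length; how many inequalities there are (possibly exponentially many) is irrelevant.

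To carry this out I would first record that the ambient dimension $k$ of the vectors $r_{\lambda}$ is at most $2^{\text{poly}}$: the numbers $q$, $m$, $p$ in the formula for $k$ are each bounded by the number $\prod_{i=1}^n |\Theta_i|^{|\Theta_i|} \le |\Theta|^{n|\Theta|}$ of strategy profiles of the direct revelation mechanism, which is $2^{\text{poly}}$ since $n$ and $|\Theta|$ are polynomial in the input size, and the remaining summands $m\cdot\sum_{i}|\Theta_i|^2$ and $\sum_{\alpha\text{ bad}}|\Theta_{i_{\alpha}}| \le p\cdot|\Theta|$ are likewise $2^{\text{poly}}$. By Proposition~\ref{extremerayspolynomial} every entry of every extreme ray $r_{\lambda}$ has polynomial encoding length, and, as already noted when $A$ and $b$ were introduced, every entry of $A$ and of $b$ has polynomial encoding length. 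Therefore the coefficient of a variable $P_i(\theta)$ in the inequality indexed by $\lambda$, which is the entry $\bigl(r_{\lambda}^T A\bigr)_{(i,\theta)} = \sum_{j=1}^k (r_{\lambda})_j\, A_{j,(i,\theta)}$, is a sum of at most $k = 2^{\text{poly}}$ products of two rationals of polynomial encoding length, hence has polynomial encoding length by the quoted fact that a sum of $2^{\text{poly}}$ rationals has encoding length polynomial in the number of summands and in their encoding lengths; the same estimate applies verbatim to the right-hand side $r_{\lambda}^T b = \sum_{j=1}^k (r_{\lambda})_j\, b_j$. Consequently the inequality $r_{\lambda}^T A x \le r_{\lambda}^T b$ has $n\cdot|\Theta|$ many coefficients (polynomially many) plus one right-hand side, all of polynomial encoding length, so its total encoding length is polynomial; since $\lambda\in\Lambda$ was arbitrary, the claim follows.

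The only genuinely delicate point is the passage from the entries of $r_{\lambda}$, $A$, $b$ to the entries of $r_{\lambda}^T A$ and $r_{\lambda}^T b$: these are sums of exponentially many terms, so one really needs the bound on the encoding length of a long sum rather than a naive term-by-term estimate, and one needs the bound $k\le 2^{\text{poly}}$ to know that the number of summands is controlled. Everything else --- the polynomial size of the entries of $A$ and $b$, and the polynomial dimension $n\cdot|\Theta|$ --- is already in hand from the setup. Note that beyond the \emph{conclusion} of Proposition~\ref{extremerayspolynomial} (that the $r_{\lambda}$ have polynomially sized entries) the block structure of $B^T$ and of the extreme rays plays no further role in this argument.
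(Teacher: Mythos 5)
Your proposal is correct and follows essentially the same route as the paper: bound each inequality $(r_{\lambda}^TA)x\leq r_{\lambda}^Tb$ of the extreme-ray description via Proposition~\ref{extremerayspolynomial}, the polynomial encoding length of the entries of $A$ and $b$, and the fact that sums of exponentially many polynomially sized rationals remain of polynomial encoding length. The only difference is that you make explicit the bound $k\le 2^{\mathrm{poly}}$ on the number of summands, which the paper leaves implicit.
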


\begin{proof}
As already stated, we have
\begin{eqnarray}
\textrm{proj}_{\mathbb{R}^{n\cdot|\Theta|}}(\mathcal{P})
& = & \{x\in\mathbb{R}^{n\cdot|\Theta|}:r_{\lambda}^T(b-Ax)\geq0 \textrm{ for all } \lambda\in \Lambda\} \nonumber\\
& = & \{x\in\mathbb{R}^{n\cdot|\Theta|}:(r_{\lambda}^TA)x\leq r_{\lambda}^T b \textrm{ for all } \lambda\in \Lambda\} \nonumber
\end{eqnarray}
where $\operatorname{extreme.rays}(Q)=\{r_{\lambda}\}_{\lambda\in
  \Lambda}$ is the finite set of extreme rays of $Q$. Hence, the claim
follows if we show that each inequality $(r_{\lambda}^TA)x\leq
r_{\lambda}^T b$ is of polynomial encoding length. To this end,
consider the inequality~$(r_{\lambda}^TA)x\leq r_{\lambda}^T b$ for a
fixed $\lambda\in \Lambda$. By
Proposition~\ref{extremerayspolynomial}, each entry of the extreme ray
$r_{\lambda}$ has polynomial encoding length. Since each entry of~$b$
has polynomial encoding length as well, the value $r_{\lambda}^T
b\in\mathbb{Q}$ is a sum of exponentially many values of polynomial
encoding length, which is again of polynomial encoding
length. Similarly, since each entry of the matrix~$A$ has polynomial
encoding length, each entry of the vector~$r_{\lambda}^T A$ has
polynomial encoding length, which implies that the whole
vector~$r_{\lambda}^T A$ is of polynomial encoding length since the
vector is of polynomial size. Thus, the encoding length of
$(r_{\lambda}^TA)x\leq r_{\lambda}^T b$ is polynomial.
\end{proof}


\begin{proof}[Proof of Proposition~\ref{polynomialsolution}]
Let $\textrm{eq}(\mathcal{P})$ (the \emph{equality set} of $\mathcal{P}$) denote the set of indices of inequalities in $Ax+By\leq b$ that
are satisfied with equality for all points in $\mathcal{P}$.
It is enough to prove the existence of a point~$(\bar{x},\bar{y})$ in $\mathcal{P}^{\textrm{ri}}$ such that each component
of $(\bar{x},\bar{y})$ has polynomial encoding length: Such a point~$(\bar{x},\bar{y})$ has to
satisfy all inequalities with indices \emph{not} in $\textrm{eq}(\mathcal{P})$ with strict inequality. 
Moreover, since the original system (where the inequalities (1) and (3) are strict) has a solution, we know that all indices of the
inequalities~(1) and (3) are not in $\textrm{eq}(\mathcal{P})$. Hence, $(\bar{x},\bar{y})$ satisfies all the inequalities~(1) and (3)
with strict inequality, i.e., it is a solution of the original system.

By Proposition~\ref{projectioncomplexity}, $\textrm{proj}_{\mathbb{R}^{n\cdot|\Theta|}}(\mathcal{P})$ is a nonempty polyhedron
of polynomial facet complexity. Using standard results from polyhedral theory  (cf. \cite[Thm. 6.5.5]{Groetschel+etal:book}),
this implies the existence of a point~$\bar{x}$ of polynomial encoding length in
$\textrm{proj}_{\mathbb{R}^{n\cdot|\Theta|}}(\mathcal{P})^{\textrm{ri}}$. In particular, each component of $\bar{x}$ has polynomial encoding length.
Moreover, by Lemma~\ref{relativeinterior}, we have $\bar{x}\in\textrm{proj}_{\mathbb{R}^{n\cdot|\Theta|}}(\mathcal{P}^{\textrm{ri}})$, so we can
choose $y_0\in\mathbb{R}^l$ such that $(\bar{x},y_0)\in \mathcal{P}^{\textrm{ri}}$, i.e., $y_0$ is a solution of the system
$A\bar{x}+By\leq b \Leftrightarrow By\leq b-A\bar{x}$ and all inequalities with indices not in $\textrm{eq}(\mathcal{P})$
are satisfied strictly for $y=y_0$. Writing $\tilde{\mathcal{P}}:=\{y\in\mathbb{R}^l: By\leq b-A\bar{x}\}$,
we have $\textrm{eq}(\tilde{\mathcal{P}})=\textrm{eq}(\mathcal{P})$:
The inclusion~$\textrm{eq}(\tilde{\mathcal{P}})\supseteq\textrm{eq}(\mathcal{P})$ follows from the definition of $\tilde{\mathcal{P}}$ and
the other inclusion follows since $y_0\in\tilde{\mathcal{P}}$ satisfies all inequalities with indices not in $\textrm{eq}(\mathcal{P})$
with strict inequality. The matrix~$B$ has the form\small
\begin{displaymath}
B=
\left(\begin{array}{cccc}
\,0\, & \,0\, & \,0\, & \,0\, \\
B_1 & \quad & \quad & 0 \\
\quad & B_2 & \quad \\
\quad & \quad & \ddots & \quad \\
0 & \quad & \quad & B_p
\end{array}\right).
\end{displaymath}\normalsize
Hence, the system~$By\leq b-A\bar{x}$ decomposes into $p$ smaller systems
$B_jy^j\leq b^j-A^j\bar{x}^j$, $j=1,\dots,p$, where $b^j,A^j,\bar{x}^j$ denote the parts of $b,A$ and $\bar{x}$,
respectively, corresponding to the lines of the system containing the submatrix~$B_j$.
Writing
\begin{displaymath}
\tilde{\mathcal{P}}_j:=\{y^j\in\mathbb{R}^{|\Theta_{-i_{\alpha_j}}|}:B_jy^j\leq b^j-A^j\bar{x}^j\},
\end{displaymath}
we have $\tilde{\mathcal{P}}=\tilde{\mathcal{P}}_1\times\dots\times\tilde{\mathcal{P}}_p$,
and each $\tilde{\mathcal{P}}_j$ is nonempty because $\tilde{\mathcal{P}}$ is nonempty.

Since $B_j\in\textrm{Mat}((1+|\Theta_{i_{\alpha_j}}|)\times|\Theta_{-i_{\alpha_j}}|)$ is of polynomial size for every $j$,
and each entry of $A,B,b$, and $\bar{x}$ has polynomial encoding length, the facet complexity of each polyhedron~$\tilde{\mathcal{P}}_j$
is polynomial. Thus, again by standard results from polyhedral theory, the relative interior of each $\tilde{\mathcal{P}}_j$ contains a
point~$\tilde{y}^j$ of polynomial encoding length.

If we now define $\bar{y}:=(\tilde{y}^1,\dots,\tilde{y}^p)$, all components of the vector~$(\bar{x},\bar{y})$ have polynomial encoding length
and we have $\bar{y}\in\tilde{\mathcal{P}}_1^{\textrm{ri}}\times\dots\times\tilde{\mathcal{P}}_p^{\textrm{ri}}=\tilde{\mathcal{P}}^{\textrm{ri}}$,
where the equality follows by Lemma~\ref{relativeinterior}.
Hence, $(\bar{x},\bar{y})$ satisfies $A\bar{x}+B\bar{y}\leq b$ and all inequalities with indices not in
$\textrm{eq}(\tilde{\mathcal{P}})=\textrm{eq}(\mathcal{P})$ are satisfied with strict inequality, i.e.,
$(\bar{x},\bar{y})\in\mathcal{P}^{\textrm{ri}}$.
\end{proof}

\begin{theorem}
Strong Implementability $\in$ \textsf{PSPACE}.
\end{theorem}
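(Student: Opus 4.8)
The plan is to place Strong Implementability in \textsf{NPSPACE} by exhibiting a nondeterministic polynomial-space decision procedure, and then to invoke Savitch's Theorem ($\textsf{NPSPACE}=\textsf{PSPACE}$). Given Theorem~\ref{characterization} and Theorem~\ref{polymechanism}, the procedure is essentially forced. First I would nondeterministically guess a payment scheme~$P:\Theta\to\mathbb{Q}^n$; by Theorem~\ref{polymechanism}, if~$f$ is strongly implementable then~$P$ may be taken of polynomial encoding length, so its $n\cdot|\Theta|$ components fit on a work tape of polynomial size (recall that the input size is $\Omega(|X|\cdot|\Theta|\cdot n)$, so~$|\Theta|$ and~$n$ are polynomially bounded in it). Second I would verify that~$\Gamma_{(f,P)}$ is incentive compatible by checking, for every agent~$i$ and every pair~$\theta_i,\tilde\theta_i\in\Theta_i$, the inequality of type~(2) for the truthful profile; in particular this certifies that~$\Gamma_{(f,P)}$ has an equilibrium. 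Third I would enumerate all strategy profiles~$\alpha$ of~$\Gamma_{(f,P)}$ one at a time; for each, I would decide whether~$\alpha$ is an equilibrium and, if it is, whether it is good ($f\circ\alpha=f$) or bad, and for each bad equilibrium I would nondeterministically guess the data~$(i_\alpha,h^\alpha,\bar P^\alpha_{i_\alpha},\bar\theta^\alpha_{i_\alpha})$ and verify conditions~1 and~2 of selective elimination. The procedure accepts iff incentive compatibility holds and every bad equilibrium encountered was successfully eliminated. Crucially, the witnesses~$(h^\alpha,\bar P^\alpha_{i_\alpha})$ for distinct bad equilibria are independent of one another (they are not part of the mechanism), so they may be guessed, checked, and discarded one equilibrium at a time.

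Correctness is immediate from the characterizations. If~$f$ is strongly implementable, Theorem~\ref{polymechanism} supplies a~$P$ of polynomial encoding length and, for each bad equilibrium~$\alpha$, elimination data of polynomial encoding length, so some nondeterministic branch accepts; conversely, an accepting branch exhibits an incentive compatible direct revelation mechanism~$\Gamma_{(f,P)}$ satisfying the selective elimination condition, so~$f$ is strongly implementable by Theorem~\ref{characterization}.

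What remains, and is the only real work, is the space accounting. A strategy profile~$\alpha=(\alpha_1,\dots,\alpha_n)$ with~$\alpha_i:\Theta_i\to\Theta_i$ has encoding length~$\sum_i|\Theta_i|\log|\Theta_i|$, polynomial in the input size, and the number of such profiles is~$\prod_i|\Theta_i|^{|\Theta_i|}$, so they can be enumerated by a counter of polynomial bit length while only the current profile (and, for a bad one, the current guessed data) is stored; this reuse of the work tape across an exponentially long enumeration is the point that must be checked carefully. The quantities the procedure actually computes are the expected utilities~$U^\Gamma_i(\cdot\,|\theta_i)$ and the left-hand sides of inequalities~(1)--(4): each is a sum of~$|\Theta_{-i}|$ terms --- exponentially many in~$n$, but polynomially many in the input size --- where each term is a product of a conditional probability~$q_i(\theta_{-i}|\theta_i)$ (a quotient whose numerator and denominator are themselves sums of polynomially many entries of~$p$, hence of polynomial encoding length) and a valuation-plus-payment term of polynomial encoding length. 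Invoking the fact recorded earlier that a sum of~$2^{\text{poly}(n)}$ rationals of polynomial encoding length again has polynomial encoding length, every such quantity, and hence each of the finitely many comparisons performed, can be evaluated on a work tape of polynomial size. Thus every configuration of the algorithm occupies polynomial space, giving Strong Implementability $\in\textsf{NPSPACE}$, and Savitch's Theorem yields Strong Implementability $\in\textsf{PSPACE}$. The main obstacle is therefore this bookkeeping rather than any new idea: Theorems~\ref{characterization} and~\ref{polymechanism} already deliver the polynomially sized certificates, and the task is to run the natural guess-and-check algorithm without ever letting a work tape exceed polynomial length.
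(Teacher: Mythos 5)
Your proposal is correct and follows essentially the same route as the paper: guess a polynomially sized payment scheme (justified by Theorem~\ref{polymechanism}), enumerate all strategy profiles reusing polynomial workspace, guess and verify polynomially sized selective-elimination data for each bad equilibrium, and conclude membership in \textsf{NPSPACE} $=$ \textsf{PSPACE} via Savitch's Theorem. Your explicit incentive-compatibility check and the more detailed space accounting only make explicit what the paper's Algorithm~\ref{polyspacealg} and surrounding remarks leave implicit.
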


\begin{proof}
Assume that the given social choice function~$f$ is strongly implementable. Then, by Theorem~\ref{polymechanism}, there exists an
incentive compatible direct revelation mechanism~$\Gamma_{(f,P)}$ of polynomial encoding length that satisfies the selective
elimination condition. Moreover, for every bad equilibrium~$\alpha$ of $\Gamma_{(f,P)}$, the
data~$(i_{\alpha},h^{\alpha},\bar{P}^{\alpha}_{i_{\alpha}},\bar{\theta}^{\alpha}_{i_{\alpha}})$ needed to selectively eliminate $\alpha$
can also be chosen to have polynomial encoding length. Now consider Algorithm~\ref{polyspacealg} on the following page for verifying
that $f$ is strongly implementable.

Since all the values~$P_i(\theta)$ and the data~$(i_{\alpha},h^{\alpha},\bar{P}^{\alpha}_{i_{\alpha}},\bar{\theta}^{\alpha}_{i_{\alpha}})$ for
every bad equilibrium~$\alpha$ have polynomial encoding length and every inequality in the system is of polynomial encoding length,
Algorithm~\ref{polyspacealg} uses only polynomial space, which proves the claim.\qedhere

\begin{algorithm}\quad\\\label{polyspacealg}
	\begin{tabularx}{\linewidth}{lX}
	1. & Guess the (polynomially many) values~$P_i(\theta)$. \\
	2. & For every strategy profile~$\alpha=(\alpha_1,\dots\alpha_n)$ in the mechanism~$\Gamma_{(f,P)}$ do:\vspace{-2mm}
				\begin{itemize}
				\item Check whether $\alpha$ is an equilibrium by going through the inequalities~(2) corresponding to $\alpha$ one by one.
				\item If $\alpha$ is \emph{not} an equilibrium, we have already found
							$(i_{\alpha},\theta^{\alpha}_{i_{\alpha}},\bar{\theta}^{\alpha}_{i_{\alpha}})$ such that the inequality~(1) corresponding
							to $\alpha$ is satisfied in the previous step.
				\item If $\alpha$ is an equilibrium, check whether $f\circ\alpha=f$ by going through all possible bid vectors~$\theta\in\Theta$.
							If $f\circ\alpha\neq f$, guess the data~$(i_{\alpha},h^{\alpha},\bar{P}^{\alpha}_{i_{\alpha}},\bar{\theta}^{\alpha}_{i_{\alpha}})$
							of polynomial encoding length needed to selectively eliminate $\alpha$ and check the inequalities~(3) and (4) corresponding
							to $\alpha$ one by one.
				\end{itemize}
	\end{tabularx}\vspace{-5mm}
\end{algorithm}
\end{proof}

As a byproduct of Theorem~\ref{polymechanism}, we obtain the following result on the size of the payments needed to strongly implement a
social choice function:

\begin{corollary}\label{polypayments}
If the social choice function~$f:\Theta \rightarrow X$ is strongly implementable, it can be strongly implemented by an augmented revelation
mechanism~$\Gamma=(S_1,\dots,S_n,g,P)$ in which each payment~$P_i(s)$, $s\in S$, $i\in N$, has polynomial encoding length.
\end{corollary}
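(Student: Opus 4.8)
The plan is to feed the polynomial-size data supplied by Theorem~\ref{polymechanism} into the elimination construction of Mookherjee and Reichelstein~\cite{Mookherjee+Reichelstein:augmented-revelation} and then to verify that this construction never produces a payment of superpolynomial encoding length. Concretely, since $f$ is strongly implementable, Theorem~\ref{polymechanism} yields an incentive compatible direct revelation mechanism~$\Gamma_{(f,P)}$ of polynomial encoding length satisfying the selective elimination condition, together with, for every bad equilibrium~$\alpha$ of $\Gamma_{(f,P)}$, selective-elimination data~$(i_\alpha,h^\alpha,\bar P^\alpha_{i_\alpha},\bar\theta^\alpha_{i_\alpha})$ of polynomial encoding length. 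Running the Mookherjee--Reichelstein elimination procedure (the sufficiency direction of Theorem~\ref{characterization}) on this mechanism and this data produces an augmented revelation mechanism~$\Gamma=(S_1,\dots,S_n,g,P)$ that strongly implements~$f$, where $S_i=\Theta_i\cup\{\text{flags}\}\cup\{\text{counterflags}\}$: one flag (available to the agent~$i_\alpha$) and one counterflag (available to some agent~$j_\alpha\neq i_\alpha$) are added for each bad equilibrium~$\alpha$, and the construction only \emph{adds} payments on profiles involving the new bids, leaving payments on profiles of old bids untouched. Note that $\Gamma$ itself has superpolynomial encoding length in general, since there may be exponentially many bad equilibria; the claim is only that \emph{each individual} payment value is short.

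Next I would classify the payment values of~$\Gamma$ into four groups and bound each. On an all-types profile~$\theta\in\Theta$ the payment to agent~$i$ is the original value~$P_i(\theta)$ from~$\Gamma_{(f,P)}$, which has polynomial encoding length by Theorem~\ref{polymechanism}. When the flagging agent~$i_\alpha$ bids its flag and every other agent bids a type vector~$\theta_{-i_\alpha}$, the payment to~$i_\alpha$ is~$\bar P^\alpha_{i_\alpha}(\theta_{-i_\alpha})$, again of polynomial encoding length by Theorem~\ref{polymechanism} (and all other agents' payments on such profiles can be taken to equal the corresponding values of~$P$). The remaining values are the counterflag payments and the payments on the ``hybrid'' profiles in which two or more agents simultaneously bid flags or counterflags; only these require an argument.

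For the hybrid profiles the outcomes and payments only need to be ``harmless'' in the Mookherjee--Reichelstein correctness argument, so one may set all such payments to~$0$. For a counterflag of agent~$j_\alpha$, its only role is to guarantee that whenever agent~$i_\alpha$ bids its flag, agent~$j_\alpha$ has a strictly profitable deviation to the counterflag, while the counterflag is never attractive to~$j_\alpha$ otherwise, so that neither the truthful equilibrium nor any other equilibrium of~$\Gamma$ is disturbed. This is an incentive requirement of bounded ``width'': it suffices to pay~$j_\alpha$ a bonus~$+M$ on those profiles where $j_\alpha$ bids the counterflag and $i_\alpha$ bids the flag, and~$-M$ on the remaining profiles where $j_\alpha$ bids the counterflag, where $M$ is any rational exceeding the total variation of all expected valuations plus the largest magnitude of a payment from the groups already bounded above. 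Each such quantity is a maximum or sum of polynomially many short rationals (using, as elsewhere in the paper, that $\sum_{\theta_{-i}}q_i(\theta_{-i}\mid\theta_i)=1$ and that a sum of exponentially many short rationals is short), so $M$ can be taken of polynomial encoding length, and crucially the \emph{same} $M$ works for all of the (exponentially many) counterflags simultaneously, so encoding lengths do not accumulate over the elimination steps.

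The main obstacle I anticipate is precisely this last point: one must re-inspect the Mookherjee--Reichelstein construction to confirm that the inequalities constraining each counterflag payment involve only the valuations~$V_i$, the conditional probabilities~$q_i$, the payments~$P_i(\theta)$ on type profiles, and the flag payments~$\bar P^\alpha$, but \emph{not} the other counterflag payments, so that no circular dependence among counterflags forces $M$ to grow and a single uniform threshold is legitimate. Granting this (which is how their construction is set up, since each counterflag is local to one bad equilibrium), everything else is the routine encoding-length bookkeeping already used repeatedly above, and Corollary~\ref{polypayments} follows.
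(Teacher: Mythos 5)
Your overall route is the same as the paper's: invoke Theorem~\ref{polymechanism} to get a polynomially encoded incentive compatible direct revelation mechanism together with polynomially encoded selective-elimination data for every bad equilibrium, feed this into the Mookherjee--Reichelstein augmentation procedure from the sufficiency part of Theorem~\ref{characterization}, and then do encoding-length bookkeeping on the payments of the resulting augmented revelation mechanism. Where you diverge is in the bookkeeping itself, and the divergence matters. The paper does \emph{not} claim that payments on previously defined profiles stay untouched; it only needs that in each of the at most $|\Theta|^{|\Theta|}\leq 2^{|\Theta|^2}$ augmentation steps every payment is changed by an \emph{additive} term of polynomial encoding length, and then concludes with the fact (stated earlier in the paper) that a sum of exponentially many rationals of polynomial encoding length again has polynomial encoding length. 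This makes the argument robust: even if a later augmentation adjusts payments set in earlier steps (including earlier flag and counterflag payments), no locality property of the construction is required. Your version instead asserts that old payments are left unchanged and that all counterflag payments can be set to a single uniform threshold~$\pm M$, which forces you to verify the structural claim you yourself flag as the main obstacle --- that the incentive constraints for each counterflag in \cite{Mookherjee+Reichelstein:augmented-revelation} involve only the valuations, the beliefs, the original payments~$P_i(\theta)$, and the flag payments~$\bar P^{\alpha}_{i_{\alpha}}$, and never other counterflag payments. As written, that claim is granted rather than proved, and it is stronger than what is needed: if the construction does let increments depend additively on previously introduced payments, your uniform-$M$ argument in its stated form breaks, while the paper's accumulation argument still goes through (note it would \emph{not} go through under multiplicative growth, which is exactly why the ``additive terms of polynomial encoding length'' phrasing is the crux). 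So either verify the locality property against the actual construction in \cite{Mookherjee+Reichelstein:augmented-revelation}, or replace the uniform-$M$ step by the paper's weaker per-step additive bound plus the exponential-sum-of-short-rationals fact; with either repair your proof is sound and otherwise matches the paper's.
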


\begin{proof}
If $f$ is strongly implementable, Theorem~\ref{polymechanism} implies the existence of an incentive compatible direct revelation
mechanism~$\Gamma^{\prime}_{(f,P^{\prime})}$ of polynomial encoding length that satisfies the selective elimination condition and, for every
bad equilibrium~$\alpha$ of $\Gamma^{\prime}_{(f,P^{\prime})}$, the
data~$(i_{\alpha},h^{\alpha},\bar{P}^{\alpha}_{i_{\alpha}},\bar{\theta}^{\alpha}_{i_{\alpha}})$ needed to selectively eliminate $\alpha$
can also be chosen to have polynomial encoding length. In particular, all payments and ``elimination payments''~$\bar{P}^{\alpha}_{i_{\alpha}}$
are of polynomial encoding length. As shown in the proof of the sufficiency part of Theorem~\ref{characterization} given in
\cite{Mookherjee+Reichelstein:augmented-revelation}, we can then obtain an augmented revelation mechanism as in the claim from
$\Gamma^{\prime}_{(f,P^{\prime})}$ by a sequence of at most $|\Theta|^{|\Theta|}\leq 2^{|\Theta|^2}$ augmentations.
In each step, one of the at most $|\Theta|^{|\Theta|}$ bad equilibria is eliminated without introducing any new equilibria.
The payments after each augmentation are given by the $\bar{P}^{\alpha}_{i_{\alpha}}$ and the payments from the previous step, which are
only changed by additive terms of polynomial encoding length. Since there are at most $|\Theta|^{|\Theta|}\leq 2^{|\Theta|^2}$ augmentation steps
and each of the initial payments in $\Gamma^{\prime}_{(f,P^{\prime})}$ is also of polynomial encoding length, this implies that each
payment~$P_i(s)$ in the final augmented revelation mechanism that strongly implements $f$ is of polynomial encoding length.
\end{proof}

\section{The Single-Agent Scenario}\label{section-singleagent}

In this section, we show that, in the case of a single agent, Strong Implementability can be decided in polynomial time via
linear programming.


\begin{definition}
Given a single-agent mechanism~$\Gamma=(S,g,P)$ and a type~$\theta\in\Theta$ of the agent, the \emph{utility} from a bid~$s\in S$ for the agent
is defined as $U^{\Gamma}(s|\theta):= V(g(s),\theta)+P(s)$.
A strategy~$\alpha:\Theta \rightarrow S$ of the agent in the mechanism~$\Gamma$ is an
\emph{equilibrium strategy} (or simply an \emph{equilibrium}) if $\alpha(\theta)$ maximizes the utility of the agent for every
$\theta\in\Theta$, i.e., if
$U^{\Gamma}(\alpha(\theta)|\theta)\geq U^{\Gamma}(\bar{s}|\theta)$ for all $\theta\in\Theta,\bar{s}\in S$.
\end{definition}

Note that this definition is just the special case of
Definition~\ref{bayesianequ} where there is only a single agent.
Incentive compatibility of a direct revelation mechanism and strong
implementation of a social choice function are defined as before. The
Strong Implementability Problem for a single agent (Single-Agent Strong
Implementability) is simply the special case of the Strong Implementability
Problem with the number~$n$ of agents equal to one. Note that the probability
distribution~$p$ is not needed for a single agent.



As Single-Agent Strong Implementability is a special case of Strong Implementability, the characterization
from Theorem~\ref{characterization} holds in the single-agent scenario as well. However, the definition of
selective elimination implies that, with just one agent, selective elimination of a bad equilibrium is \emph{never}
possible. Consequently, a single-agent direct revelation mechanism~$\Gamma_{(f,P)}$ satisfies the selective elimination condition
if and only if it has no bad equilibria at all and we obtain the following result:

\begin{theorem}\label{singleagentcharacterization}
A social choice function~$f:\Theta \rightarrow X$ in the single-agent scenario is strongly implementable if and only if there exists an incentive
compatible direct revelation mechanism~$\Gamma_{(f,P)}$ without bad equilibria.
\end{theorem}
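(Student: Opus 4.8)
The plan is to obtain Theorem~\ref{singleagentcharacterization} as a direct specialization of Theorem~\ref{characterization}. That theorem already asserts that $f$ is strongly implementable if and only if there is an incentive compatible direct revelation mechanism $\Gamma_{(f,P)}$ satisfying the selective elimination condition, so it suffices to prove that, when $n=1$, a direct revelation mechanism $\Gamma_{(f,P)}$ satisfies the selective elimination condition if and only if it has \emph{no} bad equilibria at all. The ``if'' direction is vacuous: with no bad equilibria there is nothing to eliminate. For the ``only if'' direction I would show that a single agent never admits selective elimination of a bad equilibrium, so the selective elimination condition can hold only when no bad equilibrium exists.

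To rule out selective elimination for $n=1$, the first observation is that $\Theta_{-1}$ is the empty product and hence a one-element set. Consequently the data $h:\Theta_{-1}\to X$ and $\bar P_1:\Theta_{-1}\to\mathbb{Q}$ in the definition of selective elimination collapse to a single alternative $x_h\in X$ and a single number $\bar p\in\mathbb{Q}$, and every sum over the single element of $\Theta_{-1}$ reduces to its one term. Now suppose, towards a contradiction, that some bad equilibrium $\alpha$ of $\Gamma_{(f,P)}$ could be selectively eliminated using a type $\bar\theta_1\in\Theta_1$ together with $x_h$ and $\bar p$. Under this reduction, condition~1 of selective elimination becomes
\begin{equation*}
V(x_h,\bar\theta_1)+\bar p \;>\; V(f(\alpha(\bar\theta_1)),\bar\theta_1)+P(\alpha(\bar\theta_1)),
\end{equation*}
while condition~2, instantiated specifically at the type $\bar\theta_1$, becomes
\begin{equation*}
V(f(\bar\theta_1),\bar\theta_1)+P(\bar\theta_1)\;\geq\; V(x_h,\bar\theta_1)+\bar p.
\end{equation*}

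On the other hand, since $\alpha$ is an equilibrium of the single-agent mechanism $\Gamma_{(f,P)}$, the bid $\alpha(\bar\theta_1)$ maximizes the agent's utility in $\Gamma_{(f,P)}$ over all bids in $\Theta_1$; in particular it does at least as well as the truthful bid $\bar\theta_1$, so
\begin{equation*}
V(f(\alpha(\bar\theta_1)),\bar\theta_1)+P(\alpha(\bar\theta_1))\;\geq\; V(f(\bar\theta_1),\bar\theta_1)+P(\bar\theta_1).
\end{equation*}
Chaining the last two displays yields $V(f(\alpha(\bar\theta_1)),\bar\theta_1)+P(\alpha(\bar\theta_1))\geq V(x_h,\bar\theta_1)+\bar p$, which contradicts condition~1. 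Hence no bad equilibrium can be selectively eliminated, so the selective elimination condition is equivalent to the absence of bad equilibria, and Theorem~\ref{characterization} gives the claim. I do not anticipate a genuine obstacle: the whole argument is the reduction of $h$ and $\bar P_1$ to constants plus the short inequality chain, and the only point requiring care is to instantiate condition~2 at exactly the type $\bar\theta_1$ that witnesses the violation in condition~1, using that $\alpha$ is an equilibrium so truthful reporting is an available (weakly inferior) bid for that type.
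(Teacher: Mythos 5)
Your proposal is correct and takes essentially the same route as the paper: specialize Theorem~\ref{characterization} to $n=1$ and observe that selective elimination of a bad equilibrium is impossible with a single agent, so the selective elimination condition degenerates to the absence of bad equilibria. Your explicit inequality chain (flag beats $\alpha(\bar\theta_1)$ by condition~1, truth weakly beats the flag by condition~2 at $\bar\theta_1$, and $\alpha(\bar\theta_1)$ weakly beats truth by the equilibrium property) simply spells out the impossibility claim the paper states without detail.
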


We will now show how we can use Theorem~\ref{singleagentcharacterization} to decide Single-Agent Strong Implementability in polynomial time.
In the case of a single agent, a bad equilibrium in a direct revelation mechanism is simply a strategy~$\alpha:\Theta \rightarrow \Theta$ of
the agent such that $\alpha(\theta)$ maximizes the utility of the agent for every $\theta\in\Theta$ and such that $f(\theta)\neq f(\alpha(\theta))$
for at least one $\theta\in\Theta$.
Hence, an incentive compatible direct revelation mechanism~$\Gamma_{(f,P)}$ has no bad equilibrium if and only if we have
\begin{displaymath}
V(f(\theta^{\prime}),\theta)+P(\theta^{\prime}) < V(f(\theta),\theta)+P(\theta)
\end{displaymath}
for all $\theta,\theta^{\prime}\in\Theta$ with $f(\theta)\neq f(\theta^{\prime})$. This implies that
the possible payment schemes~$P$ of an incentive compatible direct revelation mechanism~$\Gamma_{(f,P)}$ satisfying the selective
elimination condition correspond to the solutions of the following system in the variables~$P(\theta)$ for $\theta\in\Theta$:
\begin{eqnarray}
V(f(\theta^{\prime}),\theta)+P(\theta^{\prime}) & < & V(f(\theta),\theta)+P(\theta) \quad\forall \theta,\theta^{\prime}\in\Theta
\textrm{ with } f(\theta)\neq f(\theta^{\prime}) \label{strictieq}\\
V(f(\bar{\theta}),\theta)+P(\bar{\theta}) & \leq & V(f(\theta),\theta)+P(\theta) \quad\forall \theta,\bar{\theta}\in\Theta \label{incentineq}
\end{eqnarray}
Here, the inequalities~(\ref{incentineq}) encode incentive compatibility of the mechanism and, as discussed above, the strict
inequalities~(\ref{strictieq}) encode that the mechanism has no bad equilibrium.
The polyhedron~$\mathcal{P}$ defined by this system when all strict inequalities are replaced by non-strict inequalities has polynomial facet
complexity, since the system only contains polynomially many variables (and inequalities) and coefficients of polynomial encoding length. Thus,
we can check in polynomial time whether the polyhedron contains a relative interior point corresponding to a solution of the original system with
strict inequalities in (\ref{strictieq}). Hence, we obtain the following result:

\begin{theorem}
Single-Agent Strong Implementability $\in$ \textsf{P}.
\end{theorem}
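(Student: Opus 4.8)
The plan is to use Theorem~\ref{singleagentcharacterization} to reduce the problem to the feasibility of the polynomially-sized linear system (\ref{strictieq})--(\ref{incentineq}), and then to decide that feasibility with a linear program of polynomial size.

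First I would invoke Theorem~\ref{singleagentcharacterization}: $f$ is strongly implementable if and only if some incentive compatible direct revelation mechanism~$\Gamma_{(f,P)}$ has no bad equilibrium, and, as already observed, such payment schemes~$P$ are exactly the solutions of the system (\ref{strictieq})--(\ref{incentineq}) in the variables~$P(\theta)$, $\theta\in\Theta$. This system has $|\Theta|$ variables, $O(|\Theta|^2)$ inequalities, and constant terms that are differences of two valuations and hence of polynomial encoding length, so the whole system is of polynomial size. Thus deciding Single-Agent Strong Implementability amounts to deciding feasibility of this mixed strict/non-strict system.

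To do this in polynomial time I would introduce a single auxiliary variable~$t$, replace each strict inequality in (\ref{strictieq}) by the corresponding inequality with an extra summand~$+t$ added to its (smaller) left-hand side, keep the inequalities (\ref{incentineq}) unchanged, add the bound~$t\le 1$, and maximize~$t$. This is a linear program of polynomial size, solvable in polynomial time, and its optimal value is strictly positive if and only if the original system is feasible. Alternatively, as indicated in the text, one may compute in polynomial time a point~$\bar P$ in the relative interior of the closure polyhedron~$\mathcal{P}$ (which has polynomial facet complexity) and check whether $\bar P$ satisfies every inequality in (\ref{strictieq}) strictly; the correctness of this test is exactly the argument of Proposition~\ref{polynomialsolution}: if the strict system is feasible then none of the inequalities in (\ref{strictieq}) lies in $\textrm{eq}(\mathcal{P})$, so every relative interior point of~$\mathcal{P}$ satisfies them strictly, while conversely a relative interior point satisfying (\ref{strictieq}) strictly is itself a solution of the original system.

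I do not expect a genuine obstacle: all the substantive work --- the Augmented Revelation Principle, the reduction to the selective elimination condition, and the collapse of that condition to ``no bad equilibrium'' in the single-agent case --- has already been done, so what remains is the routine fact that feasibility of an explicitly given, polynomially-sized linear system (strict inequalities included) can be decided in polynomial time via linear programming. The one point that needs a little care is, as before, the passage between the strict system and its closure polyhedron~$\mathcal{P}$ through the equality set~$\textrm{eq}(\mathcal{P})$, but this is handled exactly as in the proof of Proposition~\ref{polynomialsolution}.
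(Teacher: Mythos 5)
Your proposal is correct and follows essentially the same route as the paper: Theorem~\ref{singleagentcharacterization} reduces the problem to strict/non-strict feasibility of the polynomial-size system (\ref{strictieq})--(\ref{incentineq}), which is then decided by linear programming. Your explicit maximize-$t$ formulation is just a concrete implementation of the strict-feasibility test that the paper phrases as checking for a relative interior point of the closure polyhedron~$\mathcal{P}$, and your alternative argument is exactly the paper's.
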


In fact, the above system shows that Single-Agent Strong Implementability can be decided via linear programming.

\bibliographystyle{eptcs}
\bibliography{lit_diss_english}
\end{document}